\newtheorem{theorem}{Theorem}
\newtheorem{definition}{Definition}
\newtheorem{example}{Example}
\newtheorem{lemma}{Lemma}
\newtheorem{proposition}{Proposition}
\newtheorem{corollary}{Corollary}
\newtheorem{remark}{Remark}
\newcommand{\mm}{\mathrm}
\newcommand{\ud}{\mm{d}}
\def\span{\mathop{\sf span}}
\def\T{\intercal}
\begin{document}
\title{
Structural accessibility and structural  observability of nonlinear networked systems
}

\author{
Marco~Tulio~Angulo$^{*}$, 
Andrea~Aparicio and
         Claude H.~Moog  
\thanks{M.T. Angulo is with CONACyT - Institute of Mathematics,  Universidad Nacional Autonoma de Mexico, Juriquilla Mexico. Correspondence should be addressed to mangulo@im.unam.mx.}
\thanks{A. Aparicio is with the Institute of Mathematics,  Universidad Nacional Autonoma de Mexico, Juriquilla Mexico.}%
\thanks{C. Moog is with the Laboratoire des Sciences du Num\'erique de Nantes, UMR CNRS 6004, Nantes 44321, France.}
}

\maketitle

\begin{abstract}
The classical notions of structural controllability and structural observability are receiving increasing attention in Network Science, since they provide a mathematical basis to answer how the network structure of a dynamic system affects its controllability and observability properties. 
However, these two notions are formulated assuming systems with linear dynamics, which significantly limit their applicability. 
To overcome this limitation, here we introduce and fully characterize the notions  ``structural accessibility'' and ``structural observability'' for systems with nonlinear dynamics. 
%
We show how nonlinearities make easier the problem of controlling and observing networked systems, reducing the number of variables that are necessary to directly control and directly measure. Our results contribute to understanding better the role that the network structure and nonlinearities play in our ability to control and observe complex dynamic systems.
\end{abstract}

\begin{IEEEkeywords}
networks; nonlinear systems; observability; accessibility; controllability.
\end{IEEEkeywords}

\IEEEpeerreviewmaketitle


\section{Introduction}

In a world where complex networks underlie most biological, social and technological systems that shape the human experience \cite{Barabasi:15,motter2015networkcontrology}, one central challenge is finding principles that can help us control and observe complex networked systems.  
When only the network structure of a dynamical system is known (i.e., a graph of the interactions between its variables), a central theoretical basis for this research program has been the classical notions of ``structural controllability'' and ``structural observability'' of linear systems \cite{liu2016control}. 
These two notions characterize the conditions under which almost all linear dynamical systems whose structure matches a given network are controllable or observable, respectively \cite{dion2003generic}. 
Linear structural controllability and structural observability thus provide a mathematical formalism for predicting how changes in the network structure of a system impact its controllability and observability properties. 
For example,  linear structural controllability was applied to build and then experimentally validate predictions of how removing different neurons (i.e., removing nodes in the network) affects the locomotion of the  round worm  \emph{C. elegans}  \cite{yan2017network}.  
Additionally, over the last few years, a central line of research has been characterizing minimal sets of ``driver nodes'' and ``sensor nodes'' from which we can efficiently render a complex networked system controllable and observable \cite{liu2016control}.

The conditions of  linear structural controllability and linear structural observability can be stronger than necessary when applied to systems with nonlinear dynamics, resulting in over-conservative predictions.
This is because  the lack of linear controllability (resp. linear observability) of a nonlinear system cannot be used to predict its lack of controllability (resp. observability).
An elementary example of this is a car, which is controllable but not linearly controllable because it cannot  move in the direction of the axis defined by its rear wheels.
Despite the ubiquity of nonlinear systems in nature and technology, the effects of nonlinearities on our ability to efficiently control and observe complex networked systems remain poorly understood \cite{liu2013observability,whalen2015observability,gates2016control,zanudo2017structure,haber2017state}.

%
%


Given that most systems in nature are expected to contain nonlinearities, in this Note we introduce and characterize the notions of nonlinear ``structural accessibility'' and nonlinear ``structural  observability'' as  counterparts of linear structural controllability and linear structural observability. 
These two notions we introduce characterize the conditions under which almost all nonlinear systems whose structure matches a given network are locally accessible or  locally observable almost everywhere, respectively. 
Accessibility and observability are nonlinear generalizations of  linear controllability and linear observability, which have played a central role in the development of nonlinear control theory \cite{conte2007algebraic}.
%
Somewhat counter-intuitively, 
we show that nonlinearities make  significantly easier the problem of controlling or observing a complex networked system. 
More precisely, our main result proves that the conditions for nonlinear structural accessibility and  observability are  weaker than the conditions for linear structural controllability and observability. 
%
%
%
We show this implies that we need smaller sets of driver and sensor nodes  when compared to the those necessary for  linear structural controllability and linear structural observability.

This Note is organized as follows. 
Section II summarizes the network characterization of structural controllability and structural observability of linear systems, serving as a comparison point to our results.
Section III contains our problem statement and main results. Proofs are collected in Sections IV and V. We end discussing some predictions that our structural accessibility theory offers about the locomotion of \emph{C. elegans}, and some limitations of our approach.


\section{Preliminaries}
\label{statement-mainresults}

%
The network or \emph{graph} of a system  with $N$ state variables, $M$ inputs, and $P$ outputs is a directed graph 
$\mathcal G =  ({\sf X \cup \sf Y \cup \sf U}, {\sf A \cup B \cup C})$ containing state nodes ${\sf X} = \{{\sf x}_1 , \cdots, {\sf x}_N\}$, output nodes ${\sf Y} = \{{\sf y}_1, \cdots, {\sf y}_P\} $, and  input nodes ${\sf U} = \{ {\sf u}_1, \cdots, {\sf u}_M\} $, see Fig. 1a.
%
%
Edges  take the form $({\sf x}_j \rightarrow {\sf x}_i) \in {\sf A}$ to denote that the $i$-th state variable directly depends on the $j$-th one, $({\sf x}_j \rightarrow {\sf y}_i) \in {\sf C}$ to denote that the $i$-th measured output directly depends on the $j$-th state variable, and    $({\sf u}_j \rightarrow {\sf x}_i) \in {\sf B}$ to denote  that the $i$-th state variable directly depends on the $j$-th control input.
%
%
We  allow graphs with empty output or input node sets to represent systems without  outputs or inputs, respectively.

In the framework of linear structural controllability and linear structural observability  the system dynamics is of course assumed linear. Then   the controllability and observability  of the set of all linear systems whose structure matches the graph $\mathcal G$ is analyzed. 
More precisely, the system dynamics is assumed to have the form
\begin{equation}
\label{eq:linear}
 \dot x(t) = Ax(t) + Bu(t), \quad y(t) = Cx(t),
\end{equation}
where $x(t) \in \mathbb R^N$, $u(t) \in \mathbb R^M$ and $y(t) \in \mathbb R^P$ are the state, input, and output of the system at time $t$, respectively. Here  $A = (a_{ij}) \in \mathbb R^{N \times N}$, $B = (b_{ij}) \in \mathbb R^{N \times M}$ and $C = (c_{ij}) \in \mathbb R^{P \times N}$ are matrices of parameters.
The \emph{structure} of Eq. \eqref{eq:linear} is determined by the zero/non-zero pattern of these three matrices.
%
%
Thus, given a graph $\mathcal G$,  the class $\mathfrak D_L(\mathcal G)$  of all linear systems whose structure matches $\mathcal G$ is defined as all systems \eqref{eq:linear} such that: $a_{ij} \neq 0$ iff $({\sf x}_j \rightarrow {\sf x}_i) \in {\sf A}$, 
$b_{ij} \neq 0$  iff  $({\sf u}_j \rightarrow {\sf x}_i) \in {\sf B}$, and $c_{ij} \neq 0$ iff $({\sf x}_j \rightarrow {\sf y}_i) \in {\sf C}$.
%
Note that the edges $({\sf x}_j \rightarrow {\sf x}_i)$ and $({\sf u}_j \rightarrow {\sf x}_i)$ are encoded by  differential equations. By contrast, the edges $({\sf x}_j \rightarrow {\sf y}_i)$ are encoded by  algebraic equations; these output edges have direction because the output map $y = Cx$ is not necessarily one-to-one (e.g., the single output $y = x_1 + x_2$).
Thus, the class  $\mathfrak D_L(\mathcal G)$  describes the set of all linear dynamics that a system can take if its structure coincides with $\mathcal G$.

The class $\mathfrak D_L(\mathcal G)$ is said \emph{structurally controllable} (resp. \emph{structurally observable}) if 
 it contains at least one system that is linearly controllable (resp. linearly observable) \cite{dion2003generic}.
In that case we also say that $\mathcal G$ is linearly structurally controllable (resp. linearly structurally observable).
It turns out that when one system in $\mathfrak D_L(\mathcal G)$ is linearly controllable (resp. linearly observable), then almost all other systems in $\mathfrak D_L(\mathcal G)$ are linearly controllable as well (resp. linearly observable) \cite{dion2003generic}.
This means that, if $\mathfrak D_L(\mathcal G)$ is structurally controllable (resp. structurally observable), any of its systems  is either controllable (resp. observable), or becomes controllable (resp. observable) by an infinitesimal change in the nonzero entries of the matrices $A,B$ and $C$.
%
%
%
A central result in the theory of structural linear systems, which can be traced back to the pioneer work of Lin in the 70's \cite{lin1974structural},  is the following:
\begin{theorem}(see, e.g., \cite{dion2003generic}) $\mathfrak D_L(\mathcal G)$ is:
\begin{itemize}
\item[(i)]  structurally controllable iff each state node  is the end-node of a path that starts in ${\sf U}$; and there is a disjoint union of cycles and paths starting in $\sf U$ that covers  ${\sf X}$.
\item[(ii)] structurally observable iff each state node  is the start-node of a path that ends in ${\sf Y}$; and there is a disjoint union of cycles and paths ending in $\sf Y$ that covers  ${\sf X}$.
\end{itemize}
\end{theorem}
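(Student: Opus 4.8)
The plan is to prove both statements through the standard algebraic characterization of controllability and observability, exploiting the fact that these are \emph{generic} properties of the parameter entries. First I would recall that a pair $(A,B)$ is controllable iff the Kalman matrix $[B, AB, \ldots, A^{N-1}B]$ has rank $N$, or equivalently (Popov--Belevitch--Hautus) iff $[\lambda I - A \mid B]$ has rank $N$ for every $\lambda \in \mathbb{C}$. Since controllability is equivalent to the non-vanishing of finitely many polynomials in the entries $a_{ij}, b_{ij}$, the set of controllable realizations in $\mathfrak{D}_L(\mathcal{G})$ is either empty or the complement of a proper algebraic variety; hence, as already noted in the excerpt, if one system in $\mathfrak{D}_L(\mathcal{G})$ is controllable then almost all are. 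This reduces structural controllability to deciding whether the \emph{generic rank} conditions can be met over the prescribed zero/nonzero pattern.

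Next I would translate the two algebraic requirements into the two graph conditions of part (i). The first requirement is that $[A \mid B]$ has full generic row rank $N$. The generic rank of a structured matrix equals the maximum matching of its associated bipartite graph, since no nonzero term in the determinantal expansion can cancel when the nonzero entries are treated as algebraically independent indeterminates. A direct bijection then identifies a matching that saturates all $N$ state equations with a spanning subgraph of $\mathcal{G}$ that is a vertex-disjoint union of cycles and of paths rooted at $\mathsf{U}$ covering $\mathsf{X}$: following each matched edge backward, every state node either lies on a cycle or terminates at a node matched to an input column, giving the ``paths starting in $\mathsf{U}$'' condition. The second algebraic requirement is that no $\lambda$ drops the rank of $[\lambda I - A \mid B]$; I would show that such a rank drop forces a nonempty set of state nodes receiving no directed path from $\mathsf{U}$, and conversely that any inaccessible set produces a shared rank deficiency. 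This yields the reachability condition.

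For the two implications, the \emph{only if} direction is the easier one: if either graph condition fails, one exhibits a rank deficiency shared by \emph{every} realization --- an inaccessible set of nodes produces a block-triangular uncontrollable mode independent of the parameter values, while the absence of a spanning cycles-and-paths cover forces the generic rank of $[A \mid B]$ below $N$. I expect the \emph{if} direction to be the main obstacle, since there one must \emph{construct} a single controllable realization from the two combinatorial conditions. The canonical route is to assemble a spanning ``cactus'': take the disjoint paths-and-cycles cover and reconnect the cycles to the input-rooted stems using the accessibility paths, so that every state node lies on a stem emanating from $\mathsf{U}$. The delicate point is verifying that assigning algebraically independent values to the nonzero entries keeps the Kalman matrix nonsingular, i.e. ruling out accidental cancellations in its polynomial determinant; this is exactly where the transcendence of the parameters is invoked, guaranteeing that the nonvanishing generic term survives.

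Finally, part (ii) follows from part (i) by duality. Observability of $(A,C)$ is equivalent to controllability of $(A^\T, C^\T)$, and transposing the system corresponds to reversing every edge of $\mathcal{G}$ while interchanging the roles of $\mathsf{U}$ and $\mathsf{Y}$. Under this edge reversal, a path starting in $\mathsf{U}$ becomes a path ending in $\mathsf{Y}$ and a disjoint cycles-and-paths cover of $\mathsf{X}$ is preserved, so the two conditions of (i) map exactly onto the two conditions of (ii).
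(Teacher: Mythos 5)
This statement is the paper's Theorem~1, which the paper does not prove at all: it is the classical Lin-type characterization of linear structural controllability/observability, quoted with a citation to the survey of Dion, Commault and van der Woude. So there is no internal proof to compare against, and your proposal must be judged against the standard argument in the cited literature --- which is, in outline, exactly what you reconstruct: generic rank of a structured matrix equals term rank (maximum bipartite matching), the bijection between row-saturating matchings of $[A \mid B]$ and disjoint unions of cycles and ${\sf U}$-rooted paths covering ${\sf X}$ (in the matched subgraph every state node has in-degree one, so path start-nodes must be input nodes), the block-triangular argument showing that inaccessibility destroys controllability for \emph{every} realization, Lin's cactus construction for sufficiency, and the duality $(A,C) \leftrightarrow (A^\T, C^\T)$ with edge reversal for part~(ii). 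Your identification of the sufficiency direction as the real content is also accurate.

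Two points need repair. First, your intermediate claim that a PBH rank drop ``forces a nonempty set of state nodes receiving no directed path from ${\sf U}$'' is false as stated: the paper's own Example~1 (the dilation of Fig.~1a, $\dot x_1 = b_{11}u_1$, $\dot x_2 = b_{21}u_1$) is fully accessible, yet every realization has a PBH rank drop at $\lambda = 0$ because $\rank [A \mid B] = 1 < 2$. The correct pairing keeps the two values of $\lambda$ separate: rank deficiency at $\lambda = 0$ corresponds to failure of the cycles-and-paths cover (the term-rank condition), while rank drops at $\lambda \neq 0$ that persist for generic parameters correspond to inaccessibility. Your ``only if'' paragraph is consistent with this, but the sentence quoted above would not survive as written. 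Second, the cactus step is asserted rather than proved: ``reconnect the cycles to the input-rooted stems'' and ``rule out accidental cancellations in the Kalman determinant'' is precisely where the work lies --- one must show that a cactus admits a controllable realization, typically by induction on the number of buds or by an explicit nonvanishing-determinant computation with algebraically independent entries. Since you flag this as the main obstacle, it is an acknowledged incompleteness rather than an error, but a complete proof would have to supply it.
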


Recall that a \emph{path} is a sequence of nodes  ${\sf v}_1 \rightarrow {\sf v}_2 \rightarrow \cdots \rightarrow {\sf v}_n$ where ${\sf v}_i \in \sf X \cup Y \cup U$.
The \emph{start-node} of this path is ${\sf v}_1$ and its \emph{end-node}  is ${\sf v}_n$. 
A \emph{cycle} is a path that starts and ends in the same node (i.e., ${\sf v}_n = {\sf v}_1$).
Two paths are \emph{disjoint} if they have  disjoint sets of nodes.
%

Theorem 1 shows that except for a zero-measure set of ``singularities,'' the graph $\mathcal G$ of a linear system determines its controllability and observability properties. 
Note that for linear structural controllability it is not sufficient that the control inputs propagate their influence through $\mathcal G$ to all state nodes. 
Similarly, for linear structural observability, it is not sufficient that each state node can propagate its state to some output through $\mathcal G$. 
Both notions require that the graph $\mathcal G$ contains enough ``independent'' paths to propagate these effects, encoded by the existence of a disjoint union of cycles and paths that covers all state nodes.

\begin{example} 
\label{ex:example-1}
For the graph $\mathcal G$ of Fig. 1a, the class $\mathfrak D_L(\mathcal G)$ contains  all linear systems of the form
\begin{equation}
\label{eq:linear-dilation}
\left \{
\begin{split}
\dot x_1(t) &= b_{11} u_1(t), \\ 
\dot x_2(t) &= b_{21} u_1(t), \\
 y_1(t) &= c_{11} x_1(t) + c_{12} x_2(t),
\end{split}
\right.
\end{equation}
with nonzero constants $b_{11}, b_{21}, c_{11}$ and $c_{12}$. 
Recall that:
\begin{itemize}
\item [1.] Together with isolated nodes in $\mathcal G$,  the main obstacle for  linear structural controllability is the presence of so-called ``dilations'' \cite{lin1974structural}. In  essence, a dilation consist of two nodes with identical dynamics that are controlled by the same input (top in Fig. 1a).
A dilation makes $\mathcal G$ not structurally controllable because it is impossible to obtain a disjoint union of paths that covers  $\sf X$. 
For Fig. 1a, all systems in $\mathfrak D_L(\mathcal G)$ are uncontrollable because their state is constrained to the plane $b_{11} x_2(t) - b_{21} x_1(t) = b_{11} x_2(0) - b_{21} x_1(0)$ for all inputs $u_1(t)$ and time $t$ (Fig. 1b). 
%

\item[2.] Analogously, so-called ``contractions'' in $\mathcal G$ are the main obstacle for linear structural observability. In  essence, a contraction corresponds to two state nodes that are measured using a single output (bottom in Fig. 1a).
Indeed, for Fig. 1a, all systems  $\mathfrak D_L(\mathcal G)$ are unobservable because  using $y_1 = c_{11} x_1 + c_{12} x_2$ and $k$ of its derivatives $y_1^{(k)} = (c_{11} b_{11} + c_{12} b_{12}) u_1^{(k)}$ it is impossible to infer the value of $x_1$ and $x_2$ (Fig. 1c). 
%
%
%
\end{itemize}

\end{example}

Theorem 1 provides a theoretical  basis for a very active research line aiming to identify and analyze the ``driver'' and ``sensor'' nodes  that render a system linearly structurally controllable and linearly structurally observable  (see, e.g., \cite{motter2015networkcontrology,liu2016control}). 
%
%
%
%
More precisely, consider a graph $\mathcal G({\sf X}, {\sf A})$ with only  state nodes $\sf X$ and edges $({\sf x}_i \rightarrow {\sf x}_j) \in  {\sf A}$.
Then define:
\begin{definition}
\label{def:driver-sensor-nodes}
 \hfill
\begin{itemize}
 \item[(i)]${\sf X}_D \subseteq \sf X$ is a set of \emph{driver nodes}  if  there exists a set $\sf U$ of input nodes and a set $\sf B$ of edges  of the form $({\sf u}_i \rightarrow {\sf x}_j)$ such that: (i) the graph $\mathcal G({\sf X \cup \sf U}, \sf A \cup B)$ is linearly structurally controllable; and (ii) all and only the driver nodes have incoming edges from the input nodes (i.e.,  $({\sf u}_i \rightarrow {\sf x}_j) \in  {\sf B}$ iff ${\sf x}_j \in {\sf X}_D$).
 \item[(ii) ] ${\sf X}_S \subseteq \sf X$ is a set of \emph{sensor nodes}  if  there exists a set $\sf Y$ of output nodes and a set  $\sf C$ of edges  of the form $({\sf x}_i \rightarrow {\sf y}_j)$ such that: (i) the graph $\mathcal G({\sf X \cup \sf Y}, \sf A \cup C)$ is linearly structurally observable; and (ii) all and only the sensor nodes have outgoing edges to the output nodes (i.e., $({\sf x}_i \rightarrow {\sf y}_j) \in  \sf C$ iff ${\sf x}_i \in {\sf X}_S$).
 \end{itemize}
 \end{definition}
 
A set of driver nodes or sensor nodes is called \emph{minimal} if it has the minimal cardinality among all  sets of driver nodes or sensor nodes, respectively.
The conditions in Theorem 1 allows finding a minimal set of driver nodes (resp. a minimal set of sensor nodes) by mapping the satisfaction of these conditions to solving maximum matching problem on the graph $\mathcal G$ (resp. $\mathcal G^\T$ obtained from $\mathcal G$ by reversing the direction of all its edges), see \cite{liu2016control}.

%
%


\section{Problem statement and Main results}
Here we generalize the analysis  of Section II by enlarging the class of dynamics that  the system can take to include arbitrary nonlinearities.
Specifically, we now consider general nonlinear systems of the form
\begin{equation}
\label{eq:nonlinear}
\dot x(t) = f\left(x (t), u(t) \right), \quad y(t) = h(x(t)),
\end{equation}
where  $f : \mathbb R^N \times \mathbb R^M \rightarrow \mathbb R^N$ and $h: \mathbb R^N \rightarrow \mathbb R^P$ are arbitrary \emph{meromorphic} functions of their arguments (i.e., each of their entries is the quotient of analytic functions).
The assumption of  meromorphic functions is very weak in the sense that it is satisfied by most models in biology, chemistry, ecology, and engineering (see, e.g., Table 1 in Ref. \cite{barzel2013universality}). 
Recall that meromorphic functions are either identically zero (written as ``$\equiv 0$'') or different from zero in an open dense subset of their domain (written as ``$\not\equiv 0$''), see  \cite[Chapter 1]{conte2007algebraic}.
This property allow us to define:

\begin{definition}  Given a meromorphic pair $\{f,h\}$, its graph   $\mathcal G_{f,h} = ({\sf X \cup \sf Y \cup \sf U}, {\sf A}_{f} \cup {\sf B}_{f} \cup {\sf C}_{h})$ has the edge-set defined as: $({\sf x}_j \rightarrow {\sf x}_i) \in {\sf A}_{f} \Leftrightarrow \partial f_i/ \partial x_j \not\equiv 0$; $({\sf u}_j \rightarrow {\sf x}_i) \in {\sf B}_{f} \Leftrightarrow \partial f_i / \partial u_j \not\equiv 0$; and $({\sf x}_j \rightarrow {\sf y}_i)  \in {\sf C}_{h} \Leftrightarrow \partial h_i / \partial x_j \not\equiv 0$.
\end{definition}



We say that two  pairs $\{f, h\}$ and $\{\tilde f, \tilde h\}$ are graph-equivalent if $\mathcal G_{f,h} = \mathcal G_{\tilde f,\tilde h} $. Since any $\{f,h\}$ is graph-equivalent to itself,  graph-equivalence is an equivalence relation.
Thus, given a graph $\mathcal G$, we can define the equivalence class
$$\mathfrak D(\mathcal G) := \{ \mbox{all meromorphic } \{f, h\}  \mbox{ such that }  \mathcal G_{f,h} = \mathcal G \}.$$
The class $\mathfrak D(\mathcal G)$ represents the set of all nonlinear dynamics that a system can have given that its graph is $\mathcal G$. 
Note that $\mathfrak D_L(\mathcal G)\subset \mathfrak D(\mathcal G)$.

As the nonlinear counterparts of linear controllability and linear observability, we consider the concepts of local \emph{accessibility} and local \emph{observability}.
We  will introduce  these concepts using the algebraic formalism of Ref. \cite{conte2007algebraic}.
Consider the field of meromorphic functions $\mathcal K$ in the variables $\{x, u, \dot u, \ddot u, \cdots, y, \dot y, \ddot y, \cdots\}$, and the  sets of differential symbols  $\ud x = (\ud x_1, \cdots, \ud x_N)^\T$, $\ud u = (\ud u_1, \cdots, \ud u_M)^\T$ and $\ud y^{(k)} = (\ud y_1^{(k)}, \cdots, \ud y_P^{(k)})^\T, k \geq 0$. 
For a function $\varphi \in \mathcal K$, its differential is $\ud \varphi = (\partial \varphi / \partial x)^\T \ud x + (\partial \varphi / \partial u)^\T \ud u$.
More generally, functions in the vector space  spanned over $\mathcal K$ by the elements of $\{\ud x, \ud u, \ud y, \cdots, \ud y^{(k)} \}$ are called \emph{one-forms}.
%
We next recall the following notions:

\begin{definition} \hfill
\begin{itemize}
\item[(i)] An \emph{autonomous element} of a system is a non-constant meromorphic function $\xi(x)$ such that its $k$-th time derivative $\xi^{(k)}$ is independent of $u$ for all $k \geq 0$, i.e., 
$$\partial \xi^{(k)} / \partial u \equiv 0,  \quad \forall k \geq 0.$$
\item[(ii)] A \emph{hidden element} of a system is a non-constant meromorphic function $\zeta(x)$  that is independent of $\{y, \cdots, y^{(k)}\}$ for all $k \geq 0$, i.e., 
$$ \ud \zeta \notin \mbox{span}_{\mathcal K} \{\ud y,  \ud \dot y, \cdots, \ud y^{(k)}  \}, \quad \forall k \geq 0. $$
\end{itemize}
\end{definition}

\begin{figure}[t!]
\begin{center}
\includegraphics[width=8cm]{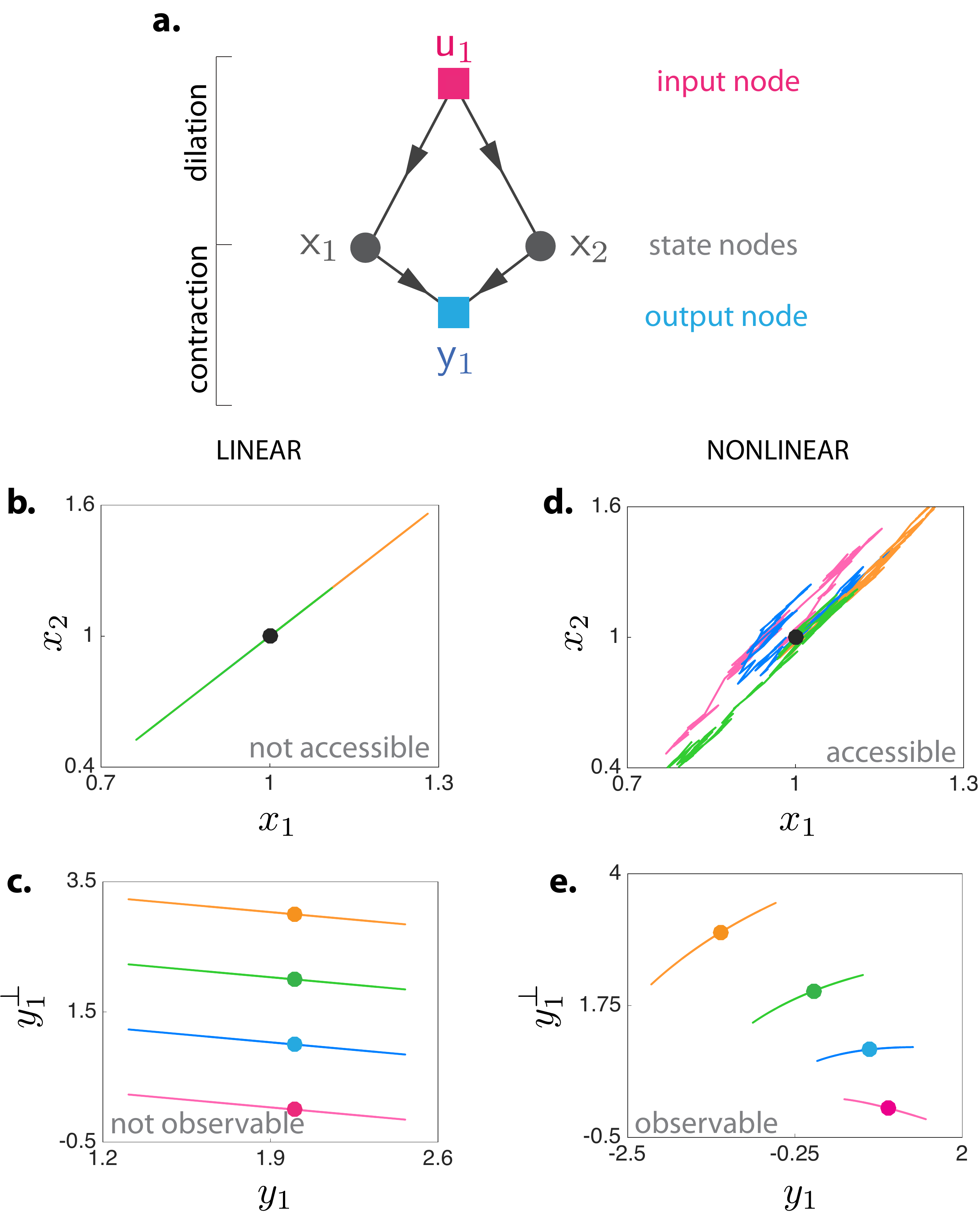}
\caption{{\bf a.} Graph of a system. Here the input, state and output nodes are ${\sf U} = \{{\sf u}_1\}$,  ${\sf X} = \{{\sf x}_1, {\sf x}_2,{\sf x}_3\}$ and ${\sf Y} = \{{\sf y}_1\}$, respectively.
%
{\bf b.} The graph of panel a is not linearly structurally controllable, meaning that no linear system with this graph is controllable. We illustrate this with five trajectories (colors) of the linear dynamics of Eq. \eqref{eq:linear-dilation} with initial condition $x(0) = (1,1)^\T$ (black dot), parameters  $b_{11} = 0.5$  and $b_{12}=1$, and random inputs $u_1(t)$. 
The lack of controllability constrains the system to the plane $\{x \in \mathbb R^2 | b_{21} x_1 - b_{11}x_2 = b_{21} x_1(0) - b_{11}x_2(0)\}$ for all time and inputs, representing the autonomous element of this system.
Consequently,  the system is not accessible and not controllable.
{\bf c.} The graph of panel a is not linearly structurally observable, meaning that no linear system with this graph is observable.
We illustrate this using the linear dynamics of Eq. \eqref{eq:linear-dilation}, where five different trajectories (colors) with different initial conditions (dots) give exactly the same projection in the output $y_1$ because they are all vertically aligned.
This is characterized by the hidden element $\zeta = c_{12} x_1 - c_{11} x_2$ which is orthogonal to the output and its derivatives.
Consequently, the system is not linearly observable and not locally observable.
{\bf d.} The nonlinear dynamics of Eq. \eqref{eq:nonlinear-dilation} is accessible because it lacks autonomous elements, illustrated here by five trajectories (colors) that are not constrained to any low-dimensional manifold.
Parameters are as in panel b and $\varepsilon = 0.5$.
This shows that the graph in panel a is structurally accessible.
{\bf e.} The nonlinear dynamics of Eq. \eqref{eq:nonlinear-dilation} is locally observable since different trajectories (colors) corresponding to different initial conditions (dots) give different projections in the output $y_1$.
Parameters are as in panel d and $\varepsilon=0.5$.
This shows that the graph in panel a is structurally locally observable.
}
\label{fig:unstable}
\vspace*{-0.5cm}
\end{center}
\end{figure}

An autonomous element constrains the state of the system to a low-dimensional manifold for all control inputs, just as in an uncontrollable linear system its state is constrained to a hyperplane.
%
%
A hidden element is an internal variable of the system  whose value cannot be inferred from the output, since it cannot be  rewritten as a function of the output and its derivatives.
A non-constant function that is not a hidden element is called observable.
%

With these notions a system is called locally \emph{accessible} (``accessible'', for short) if it does not have autonomous elements, and locally \emph{observable} (``observable'', for short) if it does not have hidden elements \cite{conte2007algebraic}.
%
%
%
%
For linear systems, the lack of autonomous elements is equivalent to linear controllability, and the lack of hidden elements is equivalent to linear observability  \cite{conte2007algebraic}.
For example, all linear systems of Eq. \eqref{eq:linear-dilation} are not controllable because  $\xi(x) = b_{11}x_1 - b_{21} x_2$ is an autonomous element for all of them.  Indeed  $ \xi^{(k)} \equiv 0$, which is independent of $u$ for all $k \geq 1$. 
Similarly,  $\zeta(x) = c_{12} x_1 - c_{11}x_{2}$ is a hidden element for all those linear systems, since $\zeta$ cannot be written as a function of $y_1 = c_{11}x_1 +c_{12}x_2$ and its derivatives $y_1^{(k)} = (c_{11} b_{11} + c_{12} b_{12})u_1^{(k-1)}$. Indeed, this happens because no output derivative contains  information of the state.
In this sense, the above definitions of accessibility and observability provide  nonlinear generalizations of linear controllability and linear observability.
%

In analogy to the definitions of linear structural controllability and observability, we now define:

\begin{definition} 
\label{def:struct-acc-obs}
$\mathfrak D(\mathcal G)$ is:
\begin{itemize}
\item[(i)] \emph{structurally accessible} if $\mathfrak D(\mathcal G)$ contains at least one system that is accessible.
\item[(ii)] \emph{structurally  observable} if  $\mathfrak D(\mathcal G)$ contains at least one system that is  observable.
\end{itemize}
\end{definition}

When $\mathfrak D(\mathcal G)$ is  structurally accessible (resp. structurally  observable), we also call the graph $\mathcal G$  structurally accessible (resp. structurally   observable).
We also call a particular $f$ structurally accessible if there exists at least one graph-equivalent $\bar f$ that is accessible. Similarly, a particular $\{f,h\}$  is structurally  observable if there exists at least one graph-equivalent $\{\bar f,\bar h\}$ that is  observable.

As in the  case of linear systems, in Lemma \ref{prop:prop1} of Section IV we prove that  in a structurally accessible class $\mathfrak D(\mathcal G)$ the subset of accessible systems is open and everywhere dense; furthermore the subset of non-accessible systems is not dense.
Similarly, in a structurally  observable class $\mathfrak D(\mathcal G)$, we prove in Lemma \ref{lemma:obs-generic} of Section V  that the subset of  observable systems is open and everywhere dense; in addition, the subset of non  observable systems is not dense.
This means that, if  $\mathfrak D(\mathcal G)$ is structurally accessible (resp. structurally  observable),  any of its systems  is either accessible (resp.  observable) or becomes accessible (resp.  observable) by an arbitrarily small change of its dynamics (see example Example 2 below).

Our main result is the following:

\begin{theorem}
\label{thm:main-theorem}
$\mathfrak D(\mathcal G)$ is:
\begin{itemize}
\item[(i)]  structurally accessible iff each state node  is the end-node of a path that starts in ${\sf U}$.
\item[(ii)] structurally  observable iff each state node  is the start-node of a path that ends in ${\sf Y}$.
\end{itemize}
\end{theorem}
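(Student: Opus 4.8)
The plan is to prove statement (i) on accessibility in full and then obtain (ii) on observability by the analogous dual argument, in which every edge of $\mathcal G$ is reversed, autonomous elements are replaced by hidden elements, and the direct dependence of $\dot x$ on $u$ is replaced by the dependence of the output derivatives on the state. I treat the two directions of (i) separately.

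For \emph{necessity} I argue by contraposition. Suppose some state node ${\sf x}_i$ is not the end-node of any path starting in ${\sf U}$, and let $\bar R \subseteq {\sf X}$ be the set of state nodes not reachable from ${\sf U}$, so that ${\sf x}_i \in \bar R \neq \emptyset$. The defining property of $\bar R$ is that it has no incoming edges from ${\sf U}$ nor from ${\sf X}\setminus \bar R$: an edge $({\sf x}_k \rightarrow {\sf x}_j)$ with ${\sf x}_j \in \bar R$ would make ${\sf x}_k$ reachable whenever ${\sf x}_j$ is, forcing ${\sf x}_k \in \bar R$ as well. Consequently, for \emph{every} $\{f,h\} \in \mathfrak D(\mathcal G)$ the block $x_{\bar R}$ obeys a closed, input-free subsystem $\dot x_{\bar R} = f_{\bar R}(x_{\bar R})$. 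Hence $\xi := x_i$ is a non-constant meromorphic function all of whose derivatives $\xi^{(k)}$ are functions of $x_{\bar R}$ alone, so $\partial \xi^{(k)}/\partial u \equiv 0$ for all $k$; thus $\xi$ is an autonomous element of every system in the class, and $\mathfrak D(\mathcal G)$ is not structurally accessible.

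For \emph{sufficiency} I exhibit a single accessible $f \in \mathfrak D(\mathcal G)$, which suffices by Definition \ref{def:struct-acc-obs}. I order the state nodes by their distance to ${\sf U}$ into layers $L_1, L_2, \dots$; reachability guarantees $\bigcup_k L_k = {\sf X}$. I then build $f$ by placing on each edge of $\mathcal G$ a generic nonlinear (e.g. monomial) term, chosen so that the signals injected into distinct nodes from a common source appear with distinct and hence functionally independent nonlinear profiles — precisely the freedom that the purely linear class $\mathfrak D_L(\mathcal G)$ lacks. To verify that such an $f$ has no autonomous element, assume $\xi(x)$ is autonomous. The direct dependence on $u$ in $\dot\xi = \sum_i (\partial\xi/\partial x_i)\, f_i$ can only come from the layer-$L_1$ nodes, and functional independence of their $u$-profiles forces $\partial\xi/\partial x_i \equiv 0$ for ${\sf x}_i \in L_1$. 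Iterating this argument, at each stage using that the successive derivatives $\ddot\xi,\dddot\xi,\dots$ are free of $u$ and of the already-eliminated lower-layer variables, together with the independence of the signals that the dynamics propagates from $L_1$ into $L_2$, from $L_2$ into $L_3$, and so on, forces $\partial\xi/\partial x_i \equiv 0$ on each successive layer. Since the layers exhaust ${\sf X}$, $\xi$ is independent of every state variable and therefore constant, contradicting the definition of an autonomous element.

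The main obstacle is exactly this inductive step of the sufficiency proof: I must show that the nonlinear signals remain functionally independent as the dynamics propagates them through the graph, so that no cancellation of the $u$-dependence can occur in any derivative $\xi^{(k)}$. This is the nonlinear counterpart of ruling out the algebraic cancellations that a dilation produces in the linear setting (Example \ref{ex:example-1}), and it is where the structural matching condition of Theorem 1 becomes unnecessary. Carrying it out rigorously requires working in the meromorphic field $\mathcal K$ and making the notion of ``generic, functionally independent nonlinear profiles'' precise — either through a genericity argument or through an explicit construction whose Jacobian-type independence can be checked directly — after which Lemma \ref{prop:prop1} upgrades the single accessible system to the open dense set of accessible systems. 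Statement (ii) then follows by the analogous argument with $\mathcal G$ replaced by $\mathcal G^\T$, the accessibility filtration replaced by the observability filtration of \cite{conte2007algebraic}, and autonomous elements replaced by hidden elements.
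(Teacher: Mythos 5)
Your necessity argument is correct and, in fact, more carefully written than the paper's one-line version (the closed input-free subsystem on the unreachable set $\bar R$ is exactly why $x_i$ is an autonomous element, and dually a hidden element for part (ii)). The problem is the sufficiency direction, where you yourself flag the ``main obstacle'' --- showing that no cancellation of $u$-dependence can occur as signals propagate through a graph whose nodes may have many incoming edges --- and then leave it unresolved. That step is not a technicality to be deferred: it is the entire content of the theorem, and your layer-by-layer induction on the full graph $\mathcal G$ is precisely the setting in which it is hard, because a node receiving several edges can mix ``profiles'' and genericity of the monomial exponents alone does not obviously prevent $\partial \xi^{(k)}/\partial u \equiv 0$ from holding identically. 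The paper sidesteps this difficulty with a two-step reduction you are missing: first prune $\mathcal G$ to a \emph{spanning forest} of $\sf U$-rooted trees in which every state node has exactly \emph{one} incoming edge (possible exactly because every node is reachable from $\sf U$); on such a tree, with dynamics $\dot x_i = (x_{T(i)})^{p_i}$ and pairwise distinct exponents $p_i$, the derivative $\dot \varphi = \sum_i (\partial \varphi/\partial x_i)\, x_{T(i)}^{p_i}$ cannot lose dependence on any tail variable since distinct powers cannot cancel, so iterating the tail map reaches $\sf u_1$ and kills all autonomous elements. Second, a separate lemma (Lemma \ref{prop:prop2}) shows that adding back each deleted edge with a generic coefficient $\alpha$ preserves structural accessibility, since accessibility holds at $\alpha = 0$ and the family is meromorphic in $\alpha$. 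Restricting to single-incoming-edge trees is what makes the no-cancellation claim provable; your proposal never performs this reduction, so the inductive step you need remains unproved.

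A second, independent gap is your claim that part (ii) ``follows by the analogous dual argument'' with edges reversed. For nonlinear systems there is no system-level duality carrying an observable $\{f,h\}$ on $\mathcal G$ to an accessible system on $\mathcal G^\T$; the duality the paper mentions is a statement about the \emph{graph conditions}, i.e.\ a corollary of Theorem 2, not a proof device. The paper accordingly proves (ii) from scratch: reversing edges turns the single-incoming-edge tree into a single-\emph{outgoing}-edge tree topped at one output, which intrinsically contains contractions (many states feeding one measurement), and resolving this requires a genuinely different construction --- the product output $y = x_1 x_2 \cdots x_{d_1}$ with $\dot x_i = c_i$, the observable quantity $z = \dot y / y$, and an explicit check that $\{\ud z, \dots, \ud z^{(d_1+1)}\}$ spans $\span_{\mathcal K} \ud x$, followed by an induction on layers. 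Nothing in your accessibility construction dualizes to produce this; as written, part (ii) of your proposal has no proof at all.
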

\begin{proof} See Proposition \ref{prop:prop-struct-accessibility} in Section IV for point (i), and Proposition \ref{prop:struct-obs} in Section V for point (ii). \end{proof}

The above Theorem shows that despite the  observability of a nonlinear system may depend on which particular inputs are applied to it,  its structural  observability is independent of the inputs. 
This happens because removing all edges that connect the inputs to the state variables will not change if condition (ii) of Theorem 2 is satisfied.
Indeed, note  that including more edges in a graph cannot destroy its structural accessibility or structural  observability.
Note also that a ``duality'' similar to the case of linear systems remains: a network is structurally accessible if and only if its ``dual network'' (with reversed edges and the labels of input and output nodes interchanged) is structurally  observable.

In addition and somewhat counterintuitively, Theorem 2 shows that nonlinearities make it easier to ``control'' and ``observe'' networked systems because the conditions of  Theorem  2 are  weaker than those of Theorem 1. 
We illustrate this point by revisiting Example 1 now considering nonlinear dynamics:

\begin{example} For the graph in Fig. 1a, the class $\mathfrak D(\mathcal G)$   contains all  nonlinear systems of the form
\begin{equation}
\label{eq:nonlinear-dilation}
\left \{
\begin{split}
\dot x_1(t) &= b_{11} u_1(t) + \varepsilon, \\ 
\dot x_2(t) &= b_{21} u_1(t) + \varepsilon u_1^3(t),  \\
 y_1(t) &=  c_{11} x_1 + c_{12} x_2 + \varepsilon x_1(t) x_2(t),
\end{split}
\right.
\end{equation}
with nonzero $b_{11}, b_{21}$, and  $\varepsilon$. Note that Eq. \eqref{eq:nonlinear-dilation} is an ``$\varepsilon$-change'' of Eq. \eqref{eq:linear-dilation} because making $\varepsilon = 0$ renders Eq. \eqref{eq:nonlinear-dilation} equal to Eq. \eqref{eq:linear-dilation}. Note also:
\begin{itemize}
\item[1.] In the dilation of Fig. 1a, the nonlinearities in $\mathfrak D(\mathcal G)$ eliminate the autonomous element that was present in $\mathfrak D_L(\mathcal D)$. That is, the function $\xi (x)= b_{21} x_1  - b_{11} x_2 $ that was an autonomous element for all linear dynamics of Eq. \eqref{eq:linear-dilation} is no longer an autonomous element for Eq. \eqref{eq:nonlinear-dilation} because $\dot \xi = -   \varepsilon b_{11}  u_1^3$ depends on $u_1$.
This proves that $\mathfrak D(\mathcal G)$ is structurally accessible.
Indeed, the trajectories of Eq. \eqref{eq:nonlinear-dilation} are no longer constrained to a low-dimensional manifold (Fig. 1d).
\item[2.] In the contraction of Fig. 1a,  the nonlinearities in $\mathfrak D(\mathcal G)$ also eliminate the hidden element. To see this,  compute $\dot y_1 = \alpha_0(u_1) + \varepsilon \alpha_1 (u_1) x_1 + \varepsilon \alpha_2 (u_1) x_2,$ 
where $\alpha_0(u_1) = c_{11}[b_{11} u_1 + \varepsilon p_1] + c_{12} [b_{21}u_1 +  \varepsilon u_1^3]$, $\alpha_1(u_1) = b_{21}u_1 + \varepsilon u_1^3$ and $\alpha_2(u_1) = b_{11} u_1 + \varepsilon p_1$. 
Note  that $\alpha_1 \not\equiv 0$ and $\alpha_2 \not\equiv 0$ for almost all  $u_1$.
Therefore, the Jacobian
$$\begin{pmatrix}\partial y_1/ \partial x \\ \partial \dot y_1/ \partial x  \end{pmatrix}= \begin{pmatrix} c_{11} + \varepsilon x_2 & c_{12} + \varepsilon x_1 \\ \varepsilon \alpha_1 & \varepsilon \alpha_2 \end{pmatrix} $$ 
is generically nonsingular. Consequently, from the Implicit Function Theorem, it follows that   we can locally infer $x_1$ and $x_2$ from $y_1$ and $\dot y_1$. 
Indeed, the function $\zeta = c_{12} x_1 - c_{11} x_1$ that was a hidden element of the linear system of Eq. \eqref{eq:linear-dilation} is no longer a hidden element of Eq. \eqref{eq:nonlinear-dilation}. 
This proves that Eq. \eqref{eq:nonlinear-dilation} is  observable  (Fig. 1e), and that $\mathfrak D(\mathcal G)$ is structurally  observable.
%
%


\end{itemize}
\end{example}

\subsection{Minimal sets of driver/sensor nodes and input/output nodes.}

Consider graph $\mathcal G({\sf X}, {\sf A})$  consisting of state nodes $\sf X$ and edges $({\sf x}_i \rightarrow {\sf x}_j) \in  {\sf A}$.
We can extend  Definition \ref{def:driver-sensor-nodes} to nonlinear systems by requiring that a set of driver nodes ${\sf X}_D \subseteq \sf X$ renders $\mathcal G (\sf X \cup \sf U, \sf A \cup B)$ structurally accessible.
Similarly, a set of sensor nodes ${\sf X}_S \subseteq \sf X$ must render $\mathcal G (\sf X \cup \sf U, \sf A \cup C)$ structurally  observable.
  Then Theorem 2 has the following implication:

\begin{proposition}
 \hfill
 \label{prop:minimal-driver-sensor-nodes}
\begin{itemize}
\item[(i)] A minimal set of driver nodes is given by arbitrarily choosing one node in each root strongly-connected-component of $\mathcal G({\sf X},\sf A)$. 
\item[(ii)] A minimal set of sensor nodes is given by arbitrarily choosing one node in each top strongly-connected-component of $\mathcal G({\sf X}, \sf A)$.
\end{itemize}
\end{proposition}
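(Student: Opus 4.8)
The plan is to convert the proposition into a reachability-covering problem on the condensation of $\mathcal G({\sf X},{\sf A})$ and to solve it there; by the duality between accessibility and observability it suffices to treat the driver-node case (i) and to obtain (ii) by transposing the graph.

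First I would use Theorem 2 to restate what it means to be a valid driver set. Since the input nodes ${\sf U}$ attach only to the driver nodes, every path from ${\sf U}$ to a state node ${\sf x}_i$ must enter ${\sf X}$ through some driver node. Hence, by Theorem 2(i), ${\sf X}_D$ renders $\mathcal G({\sf X}\cup{\sf U},{\sf A}\cup{\sf B})$ structurally accessible if and only if every state node is reachable from ${\sf X}_D$ within $\mathcal G({\sf X},{\sf A})$. The analogous restatement of Theorem 2(ii) says that ${\sf X}_S$ is a valid sensor set iff every state node can reach ${\sf X}_S$, i.e. iff ${\sf X}_S$ reaches every state node in the transpose graph $\mathcal G^\T$. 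Since the root strongly-connected-components of $\mathcal G^\T$ are exactly the top strongly-connected-components of $\mathcal G({\sf X},{\sf A})$, statement (ii) is precisely statement (i) applied to $\mathcal G^\T$, so I would only prove (i).

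Next I would pass to the condensation $\mathcal G^*$, the directed acyclic graph obtained by contracting each strongly-connected-component (SCC) of $\mathcal G({\sf X},{\sf A})$ to a single vertex. Reachability between two state nodes depends only on reachability of their SCCs in $\mathcal G^*$ together with the mutual reachability of all nodes inside a common SCC; the root SCCs are exactly the sources of $\mathcal G^*$ (the vertices with no incoming edge). For sufficiency I would choose one node in each root SCC to form ${\sf X}_D$: a single chosen node reaches its whole SCC by strong connectivity, and since $\mathcal G^*$ is a finite DAG, walking backward along incoming edges from any SCC terminates at a source, so every SCC, and hence every state node, is reachable from some root SCC. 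By the restatement above this ${\sf X}_D$ is a valid driver set, giving the upper bound on the minimal cardinality.

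The hard part is the matching lower bound (minimality). Here I would argue that each root SCC forces its own driver node: a root SCC $R$ has no incoming edge in $\mathcal G^*$, so no node outside $R$ can reach any node of $R$, and therefore any valid driver set must contain at least one node of $R$. As distinct root SCCs are disjoint vertex sets, a valid driver set must intersect each of them, so its cardinality is at least the number of root SCCs. Combined with the construction above, choosing one node per root SCC is minimal, establishing (i) and, via the transpose, (ii). The delicate point is precisely this independence of the root-SCC constraints, namely that no single node can serve two distinct root SCCs, which is what makes the bound tight.
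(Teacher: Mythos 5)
Your proof is correct and follows essentially the same route as the paper: reduce driver/sensor validity to reachability via Theorem 2, pass to the SCC condensation (a DAG whose sources are the root SCCs), and pick one node per root (resp.\ top) SCC. If anything, you are more thorough than the paper's sketch, which only argues the sufficiency direction and leaves implicit both the lower bound (any valid driver set must intersect every root SCC, since a root SCC has no incoming edges) and the transposition argument that derives statement (ii) from statement (i).
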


A strongly connected component (SCC) of a graph  $\mathcal G$ is a maximal subgraph such that there is a directed path in both directions between any two of its nodes \cite[pp. 552-557]{cormen2009introduction}.
  A root SCC is an SCC without incoming edges, and a top SCC is an SCC without outgoing edges.
Recall that any directed graph can be decomposed into an acyclic graph between its SCCs, with root and top SCCs at the start and end of this graph, respectively \cite{cormen2009introduction}. 
 Let $m$ be the number of root SCCs and $p$ the number of top SCCs of $\mathcal G({\sf X}, \sf A)$. 
 Then a proof of Proposition 1-(i) is obtained from the fact that if a single input node $\sf u$ is connected to one arbitrary node ${\sf x}_j$ of each root SCC (i.e., ${\sf u} \rightarrow {\sf x}_j$, $j=1,\cdots, m$), the decomposition into the acyclic graph of SCC implies that the graph  satisfies condition (i) of Theorem \ref{thm:main-theorem}.
 Analogously, a proof of Proposition 1-(ii) is obtained from the fact that if  a single output node $\sf y$ is connected with one arbitrary node ${\sf x}_j$ of each top SCC  (i.e., $ {\sf x}_j \rightarrow {\sf y}$, $j=1,\cdots, p$), this will yield a graph that satisfies condition (ii) of Theorem \ref{thm:main-theorem}.
An additional consequence of this argument is the following:

 \begin{corollary}  \hfill
\begin{itemize}
\item[(i)]  The minimal number of driver nodes of any graph is its number of root SCCs, and the minimal number of sensor nodes is its number of  top SCCs.
\item[(ii)] The minimal number of input nodes that renders any graph structurally accessible is always one, and the minimal number of output nodes that renders any graph  structurally observable is also one.
 \end{itemize}
 \end{corollary}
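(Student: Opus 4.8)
The plan is to obtain both parts as direct consequences of Theorem \ref{thm:main-theorem} together with the strongly-connected-component (SCC) decomposition already exploited for Proposition \ref{prop:minimal-driver-sensor-nodes}. For part (i), I would establish the equality through matching upper and lower bounds on the minimal cardinality. The upper bound is immediate from Proposition \ref{prop:minimal-driver-sensor-nodes}: selecting one node in each root SCC produces a valid set of driver nodes, so the minimal number is at most the number $m$ of root SCCs. For the lower bound, I would argue that every set of driver nodes must intersect each root SCC. The dual statements for sensor nodes and top SCCs follow by the input/output duality noted after Theorem \ref{thm:main-theorem}, applying the identical reasoning to the reversed graph $\mathcal G^\T$.

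The key step is the lower bound, and it rests on a single structural observation. By definition a root SCC $S$ has no incoming edges from the remaining state nodes, so the only edges entering $S$ in the augmented graph $\mathcal G({\sf X} \cup {\sf U}, {\sf A} \cup {\sf B})$ are direct input edges into nodes of $S$. Consequently, if no node of $S$ is a driver node, then no path starting in ${\sf U}$ can reach any node of $S$, and condition (i) of Theorem \ref{thm:main-theorem} fails for those nodes. Hence each of the $m$ root SCCs must contain at least one driver node, giving minimal cardinality $\geq m$ and, combined with the upper bound, exactly $m$. The sensor-node claim is obtained dually, since a top SCC is source-free in $\mathcal G^\T$ and a symmetric argument forces at least one sensor node per top SCC.

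For part (ii) I would exploit the fact that a single input node may connect to arbitrarily many state nodes, so the count of input nodes decouples from the count of driver nodes. Taking one input node ${\sf u}$ and drawing an edge ${\sf u} \rightarrow {\sf x}_j$ into one chosen node of each root SCC---precisely the construction used to prove Proposition \ref{prop:minimal-driver-sensor-nodes}---yields a graph in which every state node lies on a path starting at ${\sf u}$, hence a structurally accessible graph by Theorem \ref{thm:main-theorem}(i). Thus one input node always suffices. Since any nonempty state set requires at least one input for condition (i) of Theorem \ref{thm:main-theorem} to hold, the minimal number of input nodes is exactly one; the output-node statement follows dually by connecting a single output node to one node of each top SCC.

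The argument is essentially routine once Theorem \ref{thm:main-theorem} and the SCC decomposition are in hand; the only substantive point is the lower-bound observation that a root SCC, being source-free in ${\sf A}$, can be reached from ${\sf U}$ only through a direct input edge, which forces at least one driver node per root SCC. I do not expect any machinery beyond the reachability characterization of Theorem \ref{thm:main-theorem} to be required.
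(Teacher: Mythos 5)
Your proof is correct and takes essentially the same route as the paper: the paper derives this corollary as ``an additional consequence'' of its proof of Proposition~\ref{prop:minimal-driver-sensor-nodes}, namely the construction attaching a single input node ${\sf u}$ by one edge to an arbitrary node of each root SCC (and dually a single output node for top SCCs), exactly as you do, with Theorem~\ref{thm:main-theorem} supplying the accessibility/observability criterion. The only difference is that you make explicit the lower-bound step---a root SCC has no incoming edges, so every valid driver set must intersect it---which the paper leaves implicit; this is a sound and worthwhile clarification rather than a departure in method.
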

 
The second statement in the above Corollary generalizes  the result of Ref. \cite{kawano2016any} to structural systems and to the case of analyzing observability.
 
%
%
All minimal sets of driver or sensor nodes of arbitrary graphs can be found in linear time, since the SCCs of general graphs can be computed in linear time  \cite[pp. 35]{cormen2009introduction}.
For comparison, in the case of linear structural accessibility (resp. linear structural observability), solving the maximum-matching problem to find one set of  driver nodes (resp. sensor nodes) takes polynomial time, and identifying all sets of driver nodes (resp. sensor nodes) is intractable for large graphs.

 
 
 The following two Sections build the proofs for our main results.



\section{Proof of  the Structural Accessibility Theorem}


Given a graph $\mathcal G = ({\sf X \cup \sf U} , \sf A \cup B)$, here we consider the class $\mathfrak D(\mathcal G)$ of all controlled systems
\begin{equation}
\label{eq:nonlinear-control}
 \dot x(t) = f(x(t), u(t)),
\end{equation}
such that $\mathcal G_{f} = \mathcal G$. 
%
%

Our first result shows that accessible systems are ``generic'' in a structurally accessible class $\mathfrak D(\mathcal G)$, while non-accessible are not (i.e., they are ``hard to find'').
To establish this result, our argument relies on the notions of the  \emph{$k$-jet} $f_k$ of the meromorphic function $f$ ---informally defined as taking the first $k$-terms of its Taylor expansion--- and the resulting topology that can be constructed ---the so-called  ``Whitney $C^k$ topologies''.
We refer the reader to \cite[Section 2.1]{arnold1972lectures} and \cite[Chapter II.3]{golubitsky2012stable} for further details.
Specifically,  the topology we  use is defined from the notion of an open ball of radius $\varepsilon \geq 0$ centered at  a meromorphic $f_0$. This ball consists of all meromorphic $f$'s for which $\exists k \in \mathbb N$ such that the Euclidean distance between the first $\ell$ Taylor coefficients of $f_0$ and $f$ is less than $\varepsilon$ for all $\ell \geq k$.

%

%

\begin{figure}[t!]
\begin{center}
\includegraphics[width=7cm]{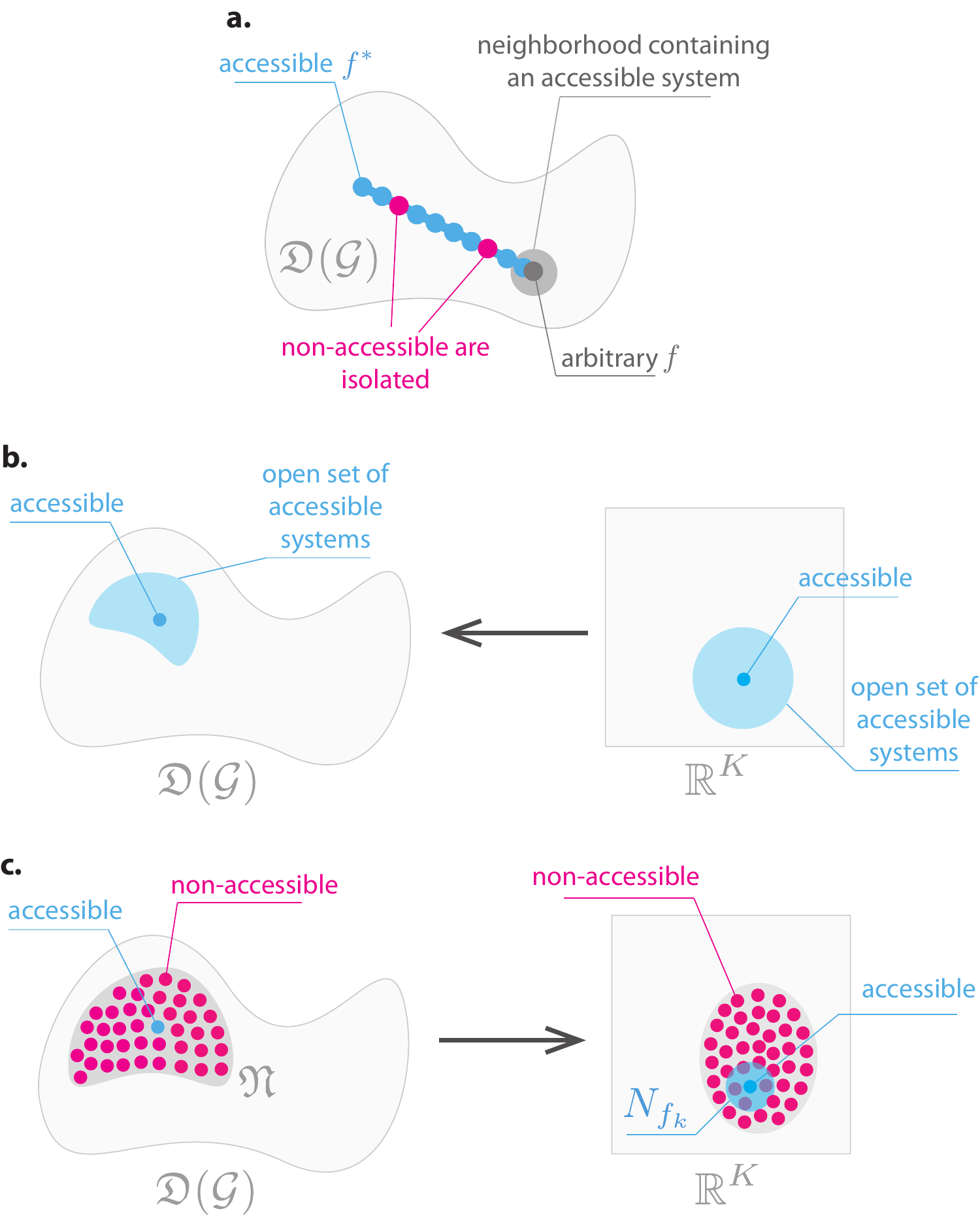}
\caption{ Figures for the proof of Lemma 1. {\bf a.} Accessible systems are dense. {\bf b.} The subset of accessible systems is open. {\bf c.} Contradiction obtained by assuming that  the subset of non-accessible systems is dense, proving that non-accessible systems are not dense.
%
%
}
\label{fig:Lemma1}
\vspace*{-0.5cm}
\end{center}
\end{figure}



\begin{lemma}
\label{prop:prop1}
 If $\mathfrak D(\mathcal G)$ is structurally accessible then: (i) the subset of accessible systems is dense everywhere in $\mathfrak D(\mathcal G)$; (ii) the subset of accessible systems is open; and (iii) the subset of non-accessible systems is not dense.
\end{lemma}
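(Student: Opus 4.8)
The plan is to reduce accessibility to the non-vanishing of a single minor and then run a genericity argument at the level of finitely many Taylor coefficients, where everything becomes elementary algebra.

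First I would recall the algebraic characterization of accessibility from \cite{conte2007algebraic}: $f$ is accessible iff $\mathcal H_\infty := \bigcap_{k\ge 1}\mathcal H_k = 0$, where $\mathcal H_1 = \span_{\mathcal K}\{\ud x\}$ and $\mathcal H_{k+1} = \{\omega\in\mathcal H_k : \dot\omega\in\mathcal H_k\}$. The decisive point is that this filtration stabilizes in at most $N$ steps, so only finitely many time derivatives — hence only partial derivatives of $f$ of order bounded in terms of $N$ — ever enter. Consequently there is an \emph{accessibility matrix} $\mathcal A(f)$ with meromorphic entries, assembled from $f$ and these finitely many derivatives, such that $\dim_{\mathcal K}\mathcal H_\infty = N - \rank_{\mathcal K}\mathcal A(f)$; thus $f$ is accessible iff $\rank_{\mathcal K}\mathcal A(f)=N$, i.e. iff some $N\times N$ minor $\Delta(f)$ is $\not\equiv 0$. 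By the meromorphic dichotomy recalled in the text, $\Delta(f)\not\equiv 0$ iff some Taylor coefficient of $\Delta(f)$ (expanded at a generic base point) is nonzero, and each such coefficient is a \emph{polynomial} $c_\alpha(\cdot)$ in finitely many Taylor coefficients of $f$, since the entries of $\mathcal A(f)$ are polynomials in the partial derivatives of $f$. So ``$f$ accessible'' becomes ``$c_\alpha(f)\neq 0$ for some $\alpha$'', an open condition on a \emph{finite} jet of $f$.

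For density (i) I would fix an arbitrary $f\in\mathfrak D(\mathcal G)$ and take an accessible $f^\star\in\mathfrak D(\mathcal G)$, which exists by structural accessibility; then $c_\alpha(f^\star)\neq 0$ for some $\alpha$ of bounded order $K$. Since $f^\star$ respects the graph, the restriction of $c_\alpha$ to the affine space $\Theta$ of $K$-jets compatible with the \emph{absent} edges of $\mathcal G$ is a nonzero polynomial, so its zero set is nowhere dense in $\Theta$. Hence arbitrarily close to the $K$-jet of $f$ there is $\theta'\in\Theta$ with $c_\alpha(\theta')\neq 0$, and for generic such $\theta'$ the finitely many \emph{present}-edge conditions are preserved. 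Replacing the $K$-jet of $f$ by $\theta'$ while leaving all higher-order coefficients untouched yields $f'\in\mathfrak D(\mathcal G)$ with $\Delta(f')\not\equiv 0$ — hence accessible — and with $f'\to f$ in the stated topology, because only finitely many low-order coefficients are moved and by an arbitrarily small amount.

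For openness (ii) I would use the same polynomial: if $f$ is accessible then $c_\alpha(f)\neq 0$ for some $\alpha$ depending only on finitely many low-order coefficients, and any $g$ in a small enough ball around $f$ has those coefficients within $\varepsilon$ of those of $f$ (the cumulative coefficient distance is controlled by the ball radius), so $c_\alpha(g)\neq 0$ by continuity of $c_\alpha$ and $g$ is accessible. Statement (iii) then follows formally, as in Fig. 2c: by structural accessibility the accessible set is nonempty, and by (ii) it contains a ball $B$; no point of $B$ is non-accessible, so the non-accessible set omits $B$ and cannot be dense. The hard part will be Paragraph one — establishing that accessibility is controlled by a minor whose Taylor coefficients are polynomials in a finite jet of $f$ (this rests on the finite stabilization of $\mathcal H_k$, so that arbitrarily high derivatives of $f$ never enter) and reconciling the generic statement ``$\Delta\not\equiv 0$'' with the finite-jet, pointwise arguments used above; once this finite determinacy is in place, (i) and (ii) reduce to the elementary fact that a nonzero polynomial is nonzero on an open dense set, transported to $\mathfrak D(\mathcal G)$ through the jet topology.
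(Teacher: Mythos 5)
Your proposal is correct at essentially the paper's level of rigor, and it shares the paper's backbone---reduce accessibility to a condition on a finite jet, then invoke genericity of non-vanishing for polynomial/meromorphic functions---but it routes two of the three parts differently. For density (i), the paper takes the accessible witness $f^*$ and moves along the convex combination $f_\lambda = \lambda f^* + (1-\lambda)f$, arguing via the accessibility rank condition that $f_\lambda$ is accessible for almost every $\lambda$ because it is accessible at $\lambda=1$; you instead perturb the $K$-jet of $f$ inside the affine space $\Theta$ of graph-compatible jets, using that the zero set of the nonzero polynomial $c_\alpha|_\Theta$ is nowhere dense. Both work; the paper's one-parameter family handles the present/absent-edge constraints almost automatically (the segment respects absent edges exactly and present edges for a.e.\ $\lambda$), whereas your jet perturbation must police those constraints coefficient-by-coefficient, but in exchange it makes fully explicit the finite-dimensional genericity that the paper leaves implicit behind its citation of \cite{conte2007algebraic}. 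Your openness argument (ii) is essentially identical to the paper's ($k$-jet determinacy plus a neighborhood $N_{f_k}\subseteq \mathbb{R}^K$ of accessible jets). For (iii) your route is cleaner than the paper's: you observe it follows formally from (ii) and nonemptiness---the accessible set is open and nonempty, hence contains a ball that the non-accessible set misses---while the paper runs a proof by contradiction that largely replays (i) and (ii).

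One caveat on what you yourself call the hard part: the entries of an ``accessibility matrix'' $\mathcal{A}(f)$ built from the filtration $\mathcal H_{k+1} = \{\omega \in \mathcal H_k : \dot\omega \in \mathcal H_k\}$ are in general \emph{rational}, not polynomial, in the partial derivatives of $f$, since computing bases of the $\mathcal H_k$ involves solving linear systems over $\mathcal K$ (hence divisions); moreover the construction can branch on intermediate rank decisions. This is reparable (clear denominators, so non-vanishing of a minor is still a polynomial condition on a finite jet outside the locus where denominators vanish, or use a universal differential-polynomial rank test on the variational system), but it should be stated. The paper does not discharge this obligation either---its proof leans on the same appeal to the accessibility rank condition and ``generic properties of meromorphic functions''---so your proposal is not weaker than the paper here; it is merely more candid about where the real work lies.
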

\begin{proof} Although (iii) $\Rightarrow$ (i) because $\mathfrak D(\mathcal G)$ is the disjoint union of accessible and non-accessible systems, independent proofs of each statement are provided:
\hfill
\begin{itemize}
\item[(i)] We show that any $f \in \mathfrak D(\mathcal G)$ has an arbitrarily close neighbor $\tilde f \in \mathfrak D(\mathcal G)$ that is accessible (Fig. 2a).
Let $f^* \in \mathfrak D(\mathcal G)$ be an accessible system in $\mathfrak D(\mathcal G)$ (there is at least one because of the definition of structural accessibility).
Define the convex combination $ f_\lambda = \lambda f^* + (1-\lambda) f$.
Note that  $ f_\lambda \in \mathfrak D(\mathcal G)$ for almost all $\lambda \in [0,1]$.
Note also that for $\lambda = 1$ we have $f_1 = f^*$, implying that  $f_1$ is accessible. 
Consequently, due to the generic properties of meromorphic  functions and the Accessibility rank condition \cite{conte2007algebraic}, the family of systems $\{ f_\lambda\}$ are accessible for almost all $\lambda \in [0,1]$. Therefore, for any $\varepsilon >0$, there exists $\lambda^* >0$ such that $\lambda^*< \varepsilon$ and $ f_{\lambda^*}$ is accessible.
Thus, $f_{\lambda^*}$ is an $\varepsilon$-neighbor of $f$ which is accessible, completing the proof.
\item[(ii)] 
We prove  that any accessible $f \in \mathfrak D(\mathcal G)$ has a neighborhood consisting only of accessible systems.
Since $f$ is meromorphic, we can rewrite this function as the Taylor-expansion $f(x,u) = \alpha_0(x) + \sum_{i=1}^\infty \alpha_i(x) u^i$ with $\alpha_i \in \mathcal K$. Note that the accessibility of $f$ implies there exists a $k \in \mathbb \mathbb N$ such that the $k$-jet $f_k(x,u) :=\alpha_0(x) + \sum_{i=1}^k \alpha_i(x) u^i$ is accessible.
Indeed, since $f$ is accessible there cannot be autonomous elements $\xi \in \mathcal K$, implying that $\ud \xi$ is not orthogonal to at least some $\alpha_k$, $k \in \mathbb N$. 
 This implies that no (non-constant) $\xi \in \mathcal K$ can be an autonomous element for $f_k$, making the $k$-jet $f_k$ accessible.
 
%
%
%
Recall that this $k$-jet represents the first $k$ terms of the Taylor expansion of $f$, implying we can associate $f_k$ to a point in $\mathbb R^K$ for some $K$ that depends on $k$ (right in Fig. 2b).
Next we regard $f_k$ as a polynomial function of its Taylor coefficients, so that the generic properties of meromorphic functions imply that $f_k$ has a neighborhood $N_{f_k} \subseteq \mathbb R^K$ of accessible systems.
All $\tilde f \in \mathfrak D(\mathcal G)$ such that their $k$-jets $\tilde f_k$ satisfy $\tilde f_k \in N_{f_k}$ will form the open neighborhood of $f$ of accessible systems.

\item[(iii)] We prove by contradiction, assuming that $\mathfrak D(\mathcal G)$ is structurally accessible but that it contains an open set $\mathfrak N$ such that non-accessible systems are dense on it (pink in Fig. 2c).
Since $\mathfrak D(\mathcal G)$ is structurally accessible and accessible systems are dense due to Lemma \ref{prop:prop1}-(i), then $\mathfrak N$ contains at least one accessible system $f$ (blue in Fig. 2c).
Now choose $k \geq 0$ large enough such that the $k$-jet $f_k$ of the accessible system $f$ is accessible. 
The $k$-jets $\tilde f_k$ of all non-accessible systems $\tilde f$'s remain non-accessible.
Since the $f_k$ and the $\tilde f_k$'s represent the first $k$ terms of the Taylor expansion of $f$ and the $\tilde f$'s,  we can associate each of them to a point in $\mathbb R^K$ corresponding to the value of the first $k$ coefficients of their  Taylor expansion  (here again $K$ is some constant that depends on $k$). 
Since $\mathfrak N$ is a neighborhood of $f$, all its elements are mapped to a corresponding neighborhood of $f_k$ in $\mathbb R^K$ such that the points corresponding to non-accessible systems are dense (Fig. 2c).
Considering now that $f_k$ is accessible and that it is a polynomial function of its Taylor coefficients, the generic properties of meromorphic functions imply that there exists a neighborhood $N_{f_k} \subseteq \mathbb R^K$ of $f_k$  such that all its corresponding elements are accessible (blue neighborhood in Fig. \ref{fig:Lemma1}c).
This gives the desired contradiction, since it contradicts the fact that the non-accessible systems were dense.
\end{itemize}
\end{proof}

The next result allows us to analyze the structural accessibility of a graph from its \emph{spanning} subgraphs, which  will be instrumental for the proof of the main result. Recall that a subgraph $\tilde {\mathcal G}$ of $\mathcal G$  is spanning when $\tilde {\mathcal G}$ includes all nodes of $\mathcal G$.
\begin{lemma}
\label{prop:prop2} Let $\tilde {\mathcal G} \subseteq \mathcal G$ be any spanning subgraph of $\mathcal G$. If $\mathfrak D(\tilde {\mathcal G})$ is structurally accessible then $\mathfrak D( {\mathcal G})$ is also structurally accessible.
\end{lemma}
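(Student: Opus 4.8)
The plan is to perturb an accessible system of the smaller class $\mathfrak D(\tilde{\mathcal G})$ into one of the larger class $\mathfrak D(\mathcal G)$ while preserving accessibility, exploiting that enlarging the edge set only amounts to adding extra terms that can enrich, but not impoverish, the dynamics. Since $\mathfrak D(\tilde{\mathcal G})$ is structurally accessible, by definition there is an accessible $\tilde f \in \mathfrak D(\tilde{\mathcal G})$, i.e. $\mathcal G_{\tilde f} = \tilde{\mathcal G}$ and $\tilde f$ has no autonomous elements. Because $\tilde{\mathcal G} \subseteq \mathcal G$ is spanning, the two graphs share the same node set, and $\mathcal G$ differs from $\tilde{\mathcal G}$ only by a (possibly empty) collection of extra edges $\Delta := ({\sf A}\cup{\sf B}) \setminus (\tilde{\sf A}\cup\tilde{\sf B})$. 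I would then set $f_\varepsilon := \tilde f + \varepsilon\, g$, where $g$ is a simple meromorphic vector field designed to ``switch on'' exactly the edges in $\Delta$.

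Concretely, for each state node ${\sf x}_i$ I would let the $i$-th component $g_i$ depend linearly only on those variables ${\sf x}_j$ (resp. ${\sf u}_j$) for which the edge $({\sf x}_j\to{\sf x}_i)$ (resp. $({\sf u}_j\to{\sf x}_i)$) belongs to $\Delta$; for instance $g_i = \sum_{({\sf x}_j\to{\sf x}_i)\in\Delta} x_j + \sum_{({\sf u}_j\to{\sf x}_i)\in\Delta} u_j$. With this choice, $\partial (f_\varepsilon)_i/\partial x_j = \partial\tilde f_i/\partial x_j + \varepsilon\,\partial g_i/\partial x_j$, where on the old edges the $\tilde f$-term is already $\not\equiv 0$ while $g$ contributes nothing, and on the new edges of $\Delta$ the $g$-term is $\not\equiv 0$ while $\tilde f$ contributes nothing. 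Since $g$ introduces no variable outside the edge set of $\mathcal G$, no spurious edges are created. Hence $\mathcal G_{f_\varepsilon} = \mathcal G$, so $f_\varepsilon \in \mathfrak D(\mathcal G)$ for every $\varepsilon \neq 0$.

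It remains to show that $f_\varepsilon$ is accessible for some small $\varepsilon \neq 0$, and I would do this through the Accessibility rank condition: $f_\varepsilon$ is accessible iff the generic rank of the associated accessibility one-form matrix $R(f_\varepsilon)$ equals $N$. At $\varepsilon = 0$ fullness of $\rank R(\tilde f)$ means some $N\times N$ minor $\det M(\tilde f)$ is $\not\equiv 0$ as a function of $(x,u,\dot u,\dots)$. Regarding the corresponding minor $\det M(f_\varepsilon)$ as a meromorphic function of $(x,u,\dot u,\dots,\varepsilon)$, it is $\not\equiv 0$ precisely because it is nonzero at $\varepsilon = 0$, so by the generic properties of meromorphic functions it can vanish identically in $(x,u,\dots)$ only on a thin set of $\varepsilon$. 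Choosing $\varepsilon^\ast \neq 0$ outside this set makes $R(f_{\varepsilon^\ast})$ generically full-rank, i.e. $f_{\varepsilon^\ast}$ is accessible; since $\mathcal G_{f_{\varepsilon^\ast}} = \mathcal G$, this witnesses that $\mathfrak D(\mathcal G)$ is structurally accessible. (Equivalently, one could lean on Lemma~\ref{prop:prop1}-(ii): since $f_\varepsilon \to \tilde f = f_0$ in the Whitney topology as $\varepsilon \to 0$ and $f_0$ is accessible, all sufficiently small $\varepsilon$ give accessible systems.)

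The main obstacle will be the persistence of accessibility under this perturbation, and in particular making the argument legitimate across two \emph{different} graph classes: $f_0 = \tilde f$ lives in $\mathfrak D(\tilde{\mathcal G})$ while every $f_\varepsilon$ with $\varepsilon \neq 0$ lives in $\mathfrak D(\mathcal G)$, so one cannot apply Lemma~\ref{prop:prop1} verbatim inside a single class. The resolution, which the minor argument makes explicit, is that accessibility (equivalently, generic fullness of $\rank R(f)$) is a property of the function $f$ alone, independent of the class to which it is assigned. A secondary point to verify is that the linear perturbation $\varepsilon g$ does not accidentally cancel an already-present edge; this is exactly why each $g_i$ is built to depend only on the new variables of $\Delta$, so that old and new edges are governed by disjoint terms and no cancellation can occur for $\varepsilon \neq 0$.
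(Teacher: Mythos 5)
Your proposal is correct and follows essentially the same route as the paper's proof: perturb the accessible system of $\mathfrak D(\tilde{\mathcal G})$ by linear terms $\alpha x_j$ (resp. $\alpha u_j$) realizing the missing edges, then invoke the generic properties of meromorphic functions together with the accessibility rank condition to conclude that accessibility persists for almost all values of the perturbation parameter. The only cosmetic differences are that you switch on all new edges at once with a single $\varepsilon$ (the paper adds them one at a time and repeats the argument), and that you spell out the minor-determinant genericity step that the paper leaves implicit.
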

\begin{proof}  Since $\mathfrak D(\tilde {\mathcal G})$ is structurally accessible, it contains one system $\dot x = f(x,u)$ which is accessible. Notice that starting from $\tilde{ \mathcal G}$, we can recover $\mathcal G$ by adding some edges. Suppose that the edge ${\sf x}_j \rightarrow {\sf x}_i$ is added to $\tilde{ \mathcal G}$ to obtain $\mathcal G$. Then $\mathfrak D(\mathcal G)$  contains the systems
\begin{equation}
\label{eq:f+alpha}
\dot x_i = f_i(x, u) + \alpha  x_j, 
\end{equation}
for any constant $\alpha \neq 0$. Similarly, if the edge ${\sf u}_j \rightarrow {\sf x}_i$ is added $\mathfrak D(\mathcal G)$  contains the systems
\begin{equation}
\label{eq:f+alpha1}
\dot x_i = f_i(x, u) + \alpha  u_j.
\end{equation}
 For $\alpha = 0$ the systems of Eqs. \eqref{eq:f+alpha} or \eqref{eq:f+alpha1} are accessible. Additionally, their right-hand side is a meromorphic function of $\alpha$. Thus, due to the generic properties of meromorphic functions  \cite{conte2007algebraic},  both systems are accessible for almost all $\alpha \in \mathbb R$. Therefore, the class $\mathfrak D(\mathcal G)$ is structurally accessible. Repeating the same argument for all other edges completes the proof.
\end{proof}


Now consider a meromorphic function $\varphi(x, u): \mathbb R^N \times \mathbb R^M \rightarrow \mathbb R^N$ and a subset of nodes ${\sf V} \subseteq {\sf X \cup \sf U}$. 
We write $\varphi \in {\sf S}$ if $\varphi(x)$  depends on \emph{all}  variables $v_i$ for all ${\sf v}_i \in {\sf V}$.
%
%
%
With this notation, an autonomous element of Eq. \eqref{eq:nonlinear-control}  is a non-constant meromorphic function $\varphi(x)$ such that  $\varphi^{(k)} \not\in {\sf U}$ for all $k \geq 0$.

\begin{example} For the graph of Fig. 1a with the linear dynamics of Eq. \eqref{eq:linear-dilation} we have that $\xi = b_{11} x_2 - b_{21} x_1$ satisfies $\xi^{(k)} = 0$ for  all $k \geq 1$. Thus we have that $ \xi^{(k)} \not\in \sf U$ for all $k$ and hence $\xi$ is an autonomous element.
\end{example}

Next, for a set $\sf S$ of state nodes, define its ``tail-set''  $T(\sf S) \subseteq {\sf V} \cup {\sf U}$ as  all nodes which point to $\sf S$ (Fig. 3a).
We denote    $T^k( {\sf S}) :=  T( T^{k-1} (\sf S))$.
%

\begin{figure}[t!]
\begin{center}
\includegraphics[width=8.5cm]{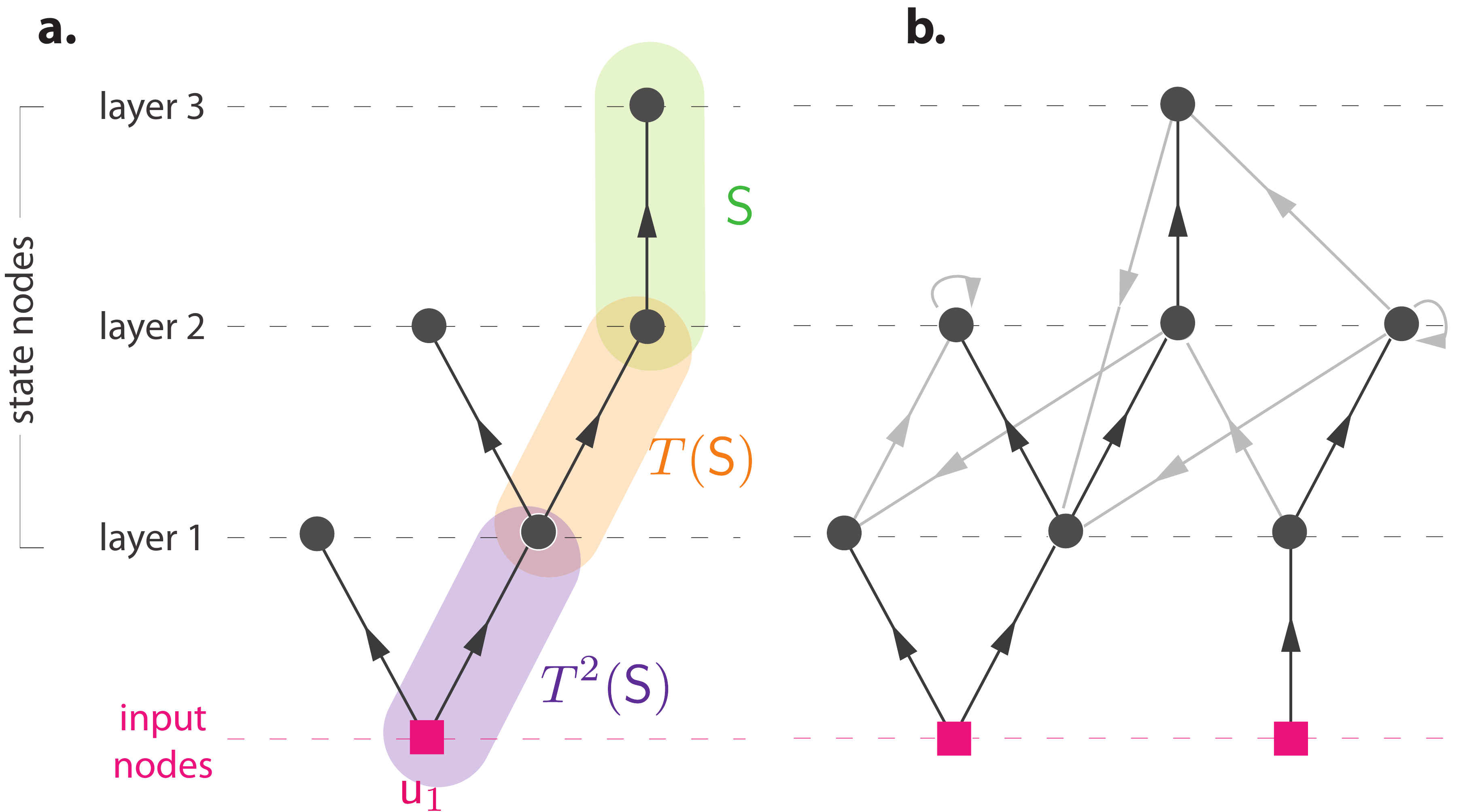}
\caption{
{\bf a.} A tree graph $\mathcal G$ where each state node has only one incoming edge and its root is an input node. 
A set $\sf S$ and its tail-sets $T(\sf S)$ and $T^2(\sf S)$ are marked in green, orange and purple, respectively. 
{\bf b.} From the graph $\mathcal G$ with  $M=2$ input nodes (dark and light edges),  a spanning subgraph $\tilde {\mathcal G}$ (dark edges) is obtained that has  $M$ disjoint trees,  one incoming edge per state node, and  roots at the input nodes. 
}
\label{fig:unstable}
\vspace*{-0.5cm}
\end{center}
\end{figure}

\begin{example} 
\label{ex:example-tree}
Consider a graph $\mathcal G$ which is a (connected) directed tree with each state node ${\sf x}_i$ having a single incoming edge, and rooted at a single input node ${\sf u}_1$ (Fig. 3a).
Note that the state nodes can be organized into $L$ layers according to the distance they have to the input node, with the first layer being all state nodes with distance one.
Consider the polynomial dynamics
\begin{equation}
\label{eq:polynomial-eq}
\dot x_i = (f_{T(i)})^{p_i} 
\end{equation}
where $f_{T(i)} = x_{T(i)}$ if ${\sf x}_i$ is in layer $\geq 2$, and $f_{T(i)} = u_1$ otherwise.
The vector $p = (p_1, \cdots, p_N) \in \mathbb N_+^N$ contains $N$ different integers with $\min_k p_k$ large enough.

For this graph $\mathcal G$ and the dynamics of Eq. \eqref{eq:polynomial-eq},   any non-constant meromorphic function $\varphi \in {\sf S}$ satisfies $\dot \varphi \in T({\sf S})$ for any $\sf S \subseteq \sf X$.
Namely,  if $\varphi$ depends on $\{x_i, \cdots, x_k\}$, then $\dot \varphi$ depends on all variables $\{f_{T(i)}, \cdots, f_{T(k)}\}$.
To show this, just note that
$$\dot \varphi = \sum_{i \in \sf S} \frac{\partial \varphi}{ \partial x_i} \dot x_i =  \sum_{i \in \sf S} \frac{\partial \varphi}{ \partial x_i} f_{T(i)}^{p_i} \in  T({\sf S}),$$
and that no term can cancel out in the sums because they have different exponents.

This observation allow us to prove that this system is accessible.
Indeed, take any $\sf S \subseteq \sf X$ and any non-constant meromorphic function $\varphi \in \sf S$. Since all state nodes are the end-node of a $\sf U$-rooted path, there exists a finite $k$ such that ${\sf u}_1 \in T^k (\sf S)$. Since  $\varphi^{(k)} \in T^k ({\sf S})$, this implies that $\varphi$ cannot be an autonomous element.
\end{example}

Combining Example \ref{ex:example-tree} with Lemma \ref{prop:prop2}, we have actually proved the following result:
\begin{lemma}
\label{lem:lemma-1} Assume that $\mathcal G$ is spanned by a disjoint union of directed trees rooted at $\sf U$, with each state node having a single incoming edge. Then $\mathfrak D(\mathcal G)$ is structurally accessible.
\end{lemma}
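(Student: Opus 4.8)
The plan is to reduce the statement to the single-tree case already settled in Example \ref{ex:example-tree}, and then lift structural accessibility from the tree subgraph up to $\mathcal G$ through Lemma \ref{prop:prop2}. By hypothesis there is a spanning subgraph $\tilde{\mathcal G} \subseteq \mathcal G$ which is a disjoint union of directed trees, each rooted at an input node and with every state node carrying a single incoming edge in $\tilde{\mathcal G}$. Because $\tilde{\mathcal G}$ spans $\mathcal G$, Lemma \ref{prop:prop2} guarantees that it suffices to prove $\mathfrak D(\tilde{\mathcal G})$ is structurally accessible; the remaining edges of $\mathcal G$ can then be reinstated one at a time without destroying accessibility.

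To handle $\tilde{\mathcal G}$ I would exhibit one accessible system in $\mathfrak D(\tilde{\mathcal G})$ by placing the polynomial dynamics of Eq. \eqref{eq:polynomial-eq} on every tree of the forest simultaneously: for each state node ${\sf x}_i$ set $\dot x_i = (f_{T(i)})^{p_i}$, where $f_{T(i)}$ is the unique predecessor of ${\sf x}_i$ in its tree (a state variable, or an input variable when ${\sf x}_i$ lies in the first layer of its tree), and where $p_1,\dots,p_N$ are chosen as in Example \ref{ex:example-tree} to be $N$ pairwise distinct integers with $\min_k p_k$ large. This construction belongs to $\mathfrak D(\tilde{\mathcal G})$. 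I would then rerun the tail-set computation of Example \ref{ex:example-tree}, now read over the whole forest: for any ${\sf S}\subseteq{\sf X}$ and any non-constant $\varphi\in{\sf S}$, the identity $\dot\varphi=\sum_{i\in{\sf S}}(\partial\varphi/\partial x_i)\,f_{T(i)}^{p_i}$ forces $\dot\varphi\in T({\sf S})$, since the pairwise-distinct exponents prevent any two summands from cancelling — the one situation where cancellation could threaten, namely several nodes of ${\sf S}$ sharing a common parent, is exactly what the distinctness of the $p_i$ rules out. Iterating yields $\varphi^{(k)}\in T^k({\sf S})$, and since every state node of $\tilde{\mathcal G}$ is the end-node of a path starting in ${\sf U}$, for some finite $k$ the set $T^k({\sf S})$ contains an input node; hence $\varphi^{(k)}$ depends on $u$ and $\varphi$ is not an autonomous element. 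This makes the constructed system accessible, so $\mathfrak D(\tilde{\mathcal G})$ is structurally accessible.

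The one point that requires care, rather than a deep obstacle, is that the forest has several roots in ${\sf U}$ whereas Example \ref{ex:example-tree} has a single root. I would verify that the tail-set argument is genuinely componentwise: the predecessor map $T(\cdot)$ acts tree by tree, cross-tree summands in $\dot\varphi$ involve disjoint variable sets and therefore add without interference (and in any case cannot cancel because of the distinct exponents), the sibling-cancellation issue is confined to a single tree, and the ``reach an input in finitely many steps'' property holds for each tree separately since each is ${\sf U}$-rooted. Once these checks are in place, the reasoning of Example \ref{ex:example-tree} applies to $\tilde{\mathcal G}$ essentially verbatim, and a final application of Lemma \ref{prop:prop2} upgrades structural accessibility from $\tilde{\mathcal G}$ to $\mathcal G$, completing the proof.
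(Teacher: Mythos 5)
Your proposal is correct and takes essentially the same route as the paper, which proves this lemma precisely by ``combining Example~\ref{ex:example-tree} with Lemma~\ref{prop:prop2}'': the polynomial dynamics $\dot x_i = (f_{T(i)})^{p_i}$ with distinct, sufficiently large exponents makes the spanning forest accessible via the tail-set iteration, and the spanning-subgraph lemma lifts structural accessibility to $\mathcal G$. Your explicit check that the tail-set argument works componentwise across a multi-rooted forest (disjoint cross-tree variables, sibling cancellation excluded by distinct exponents, each tree separately $\sf U$-rooted) fills in a detail the paper leaves implicit, but it is the same argument.
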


We now have all the ingredients for proving our main result:
\begin{proposition}
\label{prop:prop-struct-accessibility}
 $\mathfrak D(\mathcal G)$ is structurally accessible iff each state node  is the end-node of a path that starts in ${\sf U}$.
\end{proposition}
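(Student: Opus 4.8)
The plan is to prove the two implications separately, leveraging the three lemmas already established.

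\textbf{Sufficiency} (the ``if'' direction). Assuming each state node is the end-node of a path starting in ${\sf U}$, I would construct a spanning subgraph $\tilde{\mathcal G} \subseteq \mathcal G$ that meets the hypotheses of Lemma \ref{lem:lemma-1}. Concretely, I would run a breadth-first search from the input set ${\sf U}$: since each state node ${\sf x}_i$ is reachable from some input, it has a finite distance $d({\sf x}_i)$ to ${\sf U}$, and I would select as its \emph{unique} incoming edge in $\tilde{\mathcal G}$ a single edge coming from a node at distance $d({\sf x}_i)-1$ (a BFS parent). Following the chain of parents from any state node strictly decreases the distance and therefore terminates at an input node, so the selected edges contain no cycle and form a disjoint union of directed trees, each rooted at a node of ${\sf U}$, with exactly one incoming edge per state node. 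Lemma \ref{lem:lemma-1} then yields that $\mathfrak D(\tilde{\mathcal G})$ is structurally accessible, and since $\tilde{\mathcal G}$ is a spanning subgraph of $\mathcal G$, Lemma \ref{prop:prop2} upgrades this to structural accessibility of $\mathfrak D(\mathcal G)$.

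\textbf{Necessity} (the ``only if'' direction). I would argue by contraposition. Suppose some state node is not reachable from ${\sf U}$, and let ${\sf S}_0 \subseteq {\sf X}$ be the (nonempty) set of all state nodes not reachable from ${\sf U}$. The crucial structural observation is that ${\sf S}_0$ is closed under taking tails: if $({\sf x}_j \rightarrow {\sf x}_i)$ with ${\sf x}_i \in {\sf S}_0$, then ${\sf x}_j$ must itself be unreachable from ${\sf U}$ (otherwise prefixing a ${\sf U}$-path to ${\sf x}_j$ with this edge would reach ${\sf x}_i$), so ${\sf x}_j \in {\sf S}_0$; and no input can point into ${\sf S}_0$ for the same reason. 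Hence $T({\sf S}_0) \subseteq {\sf S}_0$ and $T({\sf S}_0) \cap {\sf U} = \emptyset$. Consequently, for \emph{every} system $f \in \mathfrak D(\mathcal G)$, each component $\dot x_i = f_i$ with ${\sf x}_i \in {\sf S}_0$ depends only on the variables $\{x_j : {\sf x}_j \in {\sf S}_0\}$ and not on $u$. Taking any ${\sf x}_i \in {\sf S}_0$ and setting $\xi = x_i$, an induction on $k$ shows that $\xi^{(k)}$ remains a function of the variables indexed by ${\sf S}_0$ alone, so $\partial \xi^{(k)} / \partial u \equiv 0$ for all $k \geq 0$. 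Thus $\xi$ is a (non-constant) autonomous element of every system in $\mathfrak D(\mathcal G)$, so no system in the class is accessible, i.e. $\mathfrak D(\mathcal G)$ is not structurally accessible.

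The step I expect to be the main obstacle is the spanning-forest construction in the sufficiency direction: one must verify rigorously that the BFS-selected edges yield a \emph{disjoint} union of trees whose roots are precisely input nodes (and that no cycle or merging of branches occurs), since only then does Lemma \ref{lem:lemma-1} apply verbatim. The necessity direction is comparatively routine once the invariance $T({\sf S}_0) \subseteq {\sf S}_0$ is noted, as it reduces to exhibiting the single explicit autonomous element $\xi = x_i$ that works simultaneously for all dynamics in the class.
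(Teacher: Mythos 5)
Your proposal is correct and follows essentially the same route as the paper: the sufficiency direction builds a spanning forest of ${\sf U}$-rooted trees with one incoming edge per state node and invokes Lemma \ref{lem:lemma-1} together with Lemma \ref{prop:prop2}, while the necessity direction exhibits an unreachable coordinate $x_i$ as an autonomous element, exactly as in the paper's one-line contrapositive. The only difference is that you spell out details the paper leaves implicit (the BFS parent-selection argument for the forest, and the tail-invariance $T({\sf S}_0) \subseteq {\sf S}_0$ justifying why $x_i^{(k)}$ never depends on $u$), which strengthens rather than changes the argument.
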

\begin{proof} \hfill
\begin{itemize}
\item[($\Leftarrow$)] By contradiction. If there is a state node ${\sf x}_i$ that is not the end-node of any ${\sf U}$-rooted path, then $x_i$ itself is an autonomous element.
\item[($\Rightarrow$)] Since each state node is the end-node of a $\sf U$-rooted path, note we can always obtain a spanning subgraph $\tilde {\mathcal G}$ of $\mathcal G$ such that: (i) it is a disjoint union of (connected) directed trees rooted at $\sf U$; (ii) each state node has a single incoming edge (Fig. 2b).
 %
 By Lemma \ref{lem:lemma-1}, the class $\mathfrak D( {\mathcal G})$ is structurally accessible. 
 
\end{itemize}
\end{proof} 

\begin{remark}
\label{rem:trivial-accessibility}
Note that in the trivial cases of an empty graph (i.e., a graph without nodes) or a graph without state nodes (i.e.,  the underlying system has no dynamics), applying Definition \ref{def:struct-acc-obs} yields that both graphs are structurally accessible because the set of autonomous element is empty.
\end{remark}

\begin{remark} Note that, even if $\mathcal G$ is linearly structurally controllable, this does not imply that all nonlinear systems with graph $\mathcal G$ are accessible.
For example, the graph corresponding to the system  $ \dot x_1 = x_2 + x_3 u$, $\dot x_2 = -x_1$, and $  \dot x_3 = -x_1 u$
is linearly structurally controllable. Yet, this  nonlinear system is not accessible because $\xi = x_1^2 + x_2^2 + x_3^2$ is an autonomous element.
\end{remark}

\begin{remark} Note that restricting the system dynamics of Eq. \eqref{eq:nonlinear-control} to be affine in the control input  changes the graph conditions for structural accessibility. In such case, graphs that contains ``pure dilations of the control input'' as  in Fig.1a are not structurally accesible because those subgraphs only admit  linear dynamics
\end{remark}


\section{Proof of  the Structural  Observability Theorem}

We start with the following observation:

\begin{lemma}
\label{lemma:obs-generic}
\hfill
\begin{itemize}
\item[(i)] If $\mathfrak D(\mathcal G)$ is structurally  observable then the subset of  observable systems is open and dense everywhere in $\mathfrak D(\mathcal G)$; furthermore, the subset of non observable systems is not dense.
\item[(ii)] Let $\tilde{\mathcal G} \subseteq \mathcal G$ be any spanning subgraph of $\mathcal G$. If $\mathfrak D(\tilde{\mathcal G})$ is structurally  observable then $\mathfrak D(\mathcal G)$ is also structurally  observable. 
\end{itemize}
\end{lemma}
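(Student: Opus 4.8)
The plan is to mirror the two proofs already given for the accessibility case, systematically replacing the pair (autonomous element, accessibility rank condition) by its observability counterpart (hidden element, observability rank condition). Recall that a system $\{f,h\}$ is observable exactly when its observability codistribution $\mathrm{span}_{\mathcal K}\{\ud y, \ud \dot y, \ud \ddot y, \cdots\}$ has full rank $N$ over $\mathcal K$; equivalently, when the stacked Jacobian of $y$ and finitely many of its time derivatives with respect to $x$ is generically nonsingular. The crucial structural fact I would establish first is that this rank stabilizes after finitely many derivatives, so that observability is certified by a single $N\times N$ minor whose determinant is a meromorphic function of $x,u,\dot u,\cdots$ and of the Taylor coefficients of $\{f,h\}$.

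For part (i), I would follow Lemma \ref{prop:prop1} line by line. \emph{Density}: given any $\{f,h\}\in\mathfrak D(\mathcal G)$, pick an observable $\{f^*,h^*\}$ (which exists by structural observability) and form the convex combination $\{f_\lambda,h_\lambda\}=\lambda\{f^*,h^*\}+(1-\lambda)\{f,h\}$, which lies in $\mathfrak D(\mathcal G)$ for almost all $\lambda$. At $\lambda=1$ the combination is observable, so the observability determinant does not vanish identically; since it is meromorphic in $\lambda$, the combination is observable for almost all $\lambda\in[0,1]$, and in particular for some $\lambda^*$ arbitrarily close to $0$, producing an $\varepsilon$-close observable neighbor of $\{f,h\}$. \emph{Openness}: for an observable $\{f,h\}$ the rank condition is witnessed by a finite set of output derivatives, hence by the $k$-jets of $f$ and $h$ for some $k$; viewing the witnessing determinant as a polynomial in the finitely many Taylor coefficients defining these $k$-jets, its non-vanishing is a generic (open) condition in $\mathbb R^K$, and pulling back this open set to $\mathfrak D(\mathcal G)$ yields a neighborhood of observable systems. \emph{Non-density of non-observable systems}: the same contradiction argument as in Lemma \ref{prop:prop1}-(iii) applies, since an open set on which non-observable systems were dense would have to contain an observable system whose observable $k$-jet forces an entire neighborhood of observable systems, contradicting the assumed density of the non-observable ones.

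For part (ii), I would follow Lemma \ref{prop:prop2}. Since $\mathfrak D(\tilde{\mathcal G})$ is structurally observable it contains an observable $\{f,h\}$, and $\mathcal G$ is recovered from $\tilde{\mathcal G}$ by adding edges one at a time. Adding a state edge ${\sf x}_j\rightarrow{\sf x}_i$ corresponds to replacing $f_i$ by $f_i+\alpha x_j$, while adding an output edge ${\sf x}_j\rightarrow{\sf y}_i$ corresponds to replacing $h_i$ by $h_i+\alpha x_j$; in either case $\alpha=0$ returns the observable system, the observability determinant depends meromorphically on $\alpha$, and therefore the perturbed system is observable for almost all $\alpha\in\mathbb R$. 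Choosing such an $\alpha\neq 0$ realizes the new edge while preserving observability, and iterating over all added edges shows $\mathfrak D(\mathcal G)$ is structurally observable.

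The main obstacle I anticipate is not the genericity bookkeeping, which is essentially identical to the accessibility case, but justifying the finiteness needed for the $k$-jet argument in the observability setting: unlike autonomous elements, hidden elements are characterized through an increasing chain of codistributions $\mathrm{span}_{\mathcal K}\{\ud y,\cdots,\ud y^{(k)}\}$, so I must argue carefully that this chain stabilizes at a finite $k$ (bounded in terms of $N$) and that the stabilized rank is computable from the $k$-jets of $f$ and $h$ alone. A secondary subtlety is that nonlinear observability can depend on the applied input; I would handle this by working with the generic rank over the field $\mathcal K$, so that \emph{observable} means the codistribution attains rank $N$ on an open dense set of $(x,u,\dot u,\cdots)$. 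This is exactly the input-independent notion used elsewhere in the paper, and it is what makes the meromorphic-genericity arguments go through.
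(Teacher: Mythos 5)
Your proposal is correct and takes essentially the same route as the paper: the paper's own proof of this lemma is a one-line deferral stating that item (i) follows by the exact argument of Lemma~\ref{prop:prop1} and item (ii) by that of Lemma~\ref{prop:prop2}, which is precisely the substitution (hidden element for autonomous element, observability rank for accessibility rank, output-edge perturbation $h_i + \alpha x_j$ alongside state-edge perturbation) that you carry out. Your added care about the finite stabilization of the codistribution chain and about working with generic rank over $\mathcal K$ fills in details the paper leaves implicit, but it is the same argument.
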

\begin{proof} A proof for item (i) follows using the exact same argument as in the proof of Lemma \ref{prop:prop1}. Similarly, item (ii) follows using the same argument as in the proof of Lemma \ref{prop:prop2}.
\end{proof}

We next prove the structural  observability of a particular class of graphs:

\begin{lemma}
\label{lemma:observability-tree}
 Suppose that $\mathcal G(\sf X \cup Y, A \cup C)$ is a (connected) directed tree topped at a single output node $\sf y$, with each state node having a single outgoing edge. Then $\mathfrak D(\mathcal G)$ is structurally  observable.
\end{lemma}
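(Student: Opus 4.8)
The plan is to dualize the accessibility construction of Example~\ref{ex:example-tree} and Lemma~\ref{lem:lemma-1}. Since structural observability only requires exhibiting \emph{one} observable system in $\mathfrak D(\mathcal G)$, I will build an explicit one on the given tree and verify its graph. Because $\mathcal G$ is topped at $\sf y$ and every state node has a single outgoing edge, following outgoing edges from any ${\sf x}_i$ terminates at $\sf y$; this organizes the state nodes into layers $1,\dots,L$ according to their distance $d_i$ to $\sf y$, with layer $1$ being the state nodes that point directly into $\sf y$. Dual to Example~\ref{ex:example-tree}, I assign monomial dynamics in which the derivative of each node depends on its predecessors (its children in the tree):
$$\dot x_i = \sum_{{\sf x}_j \rightarrow {\sf x}_i} x_j^{p_{ij}}, \qquad y = \sum_{{\sf x}_j \rightarrow {\sf y}} x_j^{q_j},$$
where the exponents $p_{ij}$ and $q_j$ are pairwise distinct positive integers with $\min$ large enough, and where each leaf node (no incoming edge) is given a distinct nonzero constant derivative. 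This pair has graph $\mathcal G_{f,h}=\mathcal G$, so it lies in $\mathfrak D(\mathcal G)$.

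To prove this system observable I will show it has no hidden element, equivalently that $\mathrm{span}_{\mathcal K}\{\ud y, \ud\dot y, \dots\}$ has full dimension $N$, i.e.\ eventually contains each $\ud x_i$. The engine is the dual of the tail-set propagation of Example~\ref{ex:example-tree}: differentiating $y^{(k)}$ once introduces exactly the predecessors (children) of the nodes already present, so the variables occurring in $y^{(k)}$ are precisely those in layers $1,\dots,k+1$. Since every state node belongs to some layer $d_i\le L\le N$, each $x_i$ appears in some $y^{(k)}$, and the distinct-exponent choice guarantees, exactly as in Example~\ref{ex:example-tree}, that no monomial term cancels upon differentiation, so each newly reached variable enters with a coefficient $\not\equiv 0$. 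I will then select $N$ of the one-forms $\ud y^{(k)}$, $k\ge 0$, and show the associated observability Jacobian $J=\partial(y,\dot y,\dots,y^{(N-1)})/\partial x$ has $\det J\not\equiv 0$. By the generic properties of meromorphic functions together with Lemma~\ref{lemma:obs-generic}, a Jacobian that is nonsingular on an open dense set yields local observability, proving that $\mathfrak D(\mathcal G)$ is structurally observable.

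The hard part is the linear independence of the selected $\ud y^{(k)}$, and it is exactly where the single output and single outgoing edge bite: each differentiation contributes only one new one-form $\ud y^{(k)}$, yet a node with several children injects several new variables into the same derivative, so a naive ``one new variable per step'' count fails. Concretely, when $\sf y$ has several children all the forms $\ud y,\ud\dot y,\dots$ initially live in the same block $\mathrm{span}_{\mathcal K}\{\ud x_j\}$, and one must certify that they still span it. Reducing $J$ to this situation leaves a generalized (confluent) Vandermonde matrix whose entries are monomials such as $q_j^{\underline{k+1}}\,x_j^{q_j-k-1}$ times the coefficients propagated from deeper branches; its determinant is a nonzero polynomial in the state precisely because the exponents $q_j$ (and the $p_{ij}$) are pairwise distinct. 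The main obstacle of the proof is therefore organizing this no-cancellation/Vandermonde computation across the whole tree---inducting on the layers while carrying the coefficients generated by the deeper subtrees---so as to certify $\det J\not\equiv 0$ in general, rather than only in the single-layer special cases.
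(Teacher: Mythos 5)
Your overall strategy---exhibit one explicit system in $\mathfrak D(\mathcal G)$, organize the state nodes into layers by their distance to $\sf y$, and prove observability by showing that $\mathrm{span}_{\mathcal K}\{\ud y, \ud \dot y, \dots\}$ attains rank $N$---is the same as the paper's, and you correctly identify the key subtlety that the accessibility argument of Example~\ref{ex:example-tree} does not dualize for free: for accessibility it suffices that some derivative $\varphi^{(k)}$ \emph{depends} on $u$, whereas observability is a rank condition, so entrywise non-vanishing of $\partial y^{(k)}/\partial x_i$ proves nothing by itself. However, your argument stops exactly at this point. The nonsingularity of the observability Jacobian $J$ for your sum-of-distinct-powers dynamics is asserted to follow from a ``generalized (confluent) Vandermonde'' structure, but you never exhibit that matrix beyond the single-layer case, never prove $\det J \not\equiv 0$ for a tree with two or more layers, and you close by declaring that organizing this computation across the whole tree is ``the main obstacle of the proof.'' Naming the obstacle is not overcoming it: as written, the proposal establishes the lemma only for one-layer trees, and the general claim remains unproven. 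Moreover, the mechanism you invoke (pairwise distinct exponents) is not what makes even small multi-layer examples work: for $\dot x_1 = x_2^{p}+x_3^{p}$, $\dot x_2 = c_2$, $\dot x_3 = c_3$, $y = x_1^{q}$, a direct computation gives a $3\times 3$ determinant proportional to $x_2 c_3 - x_3 c_2$ (up to monomial factors), which is $\not\equiv 0$ even with \emph{equal} exponents; so the reason you offer for non-cancellation has not been identified correctly, let alone verified in general.

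The paper's proof sidesteps this difficulty by choosing the system so that the rank computation is closed-form: the output is the \emph{product} of the layer-one variables and the leaves have constant derivatives, so that $z := \dot y/y = \sum_i c_i/x_i$ is observable from $y$ and has explicit derivatives $z^{(k)} = \sum_i (-1)^k k!\, c_i^{k+1}/x_i^{k+1}$; the differentials $\ud z^{(k)}$ then visibly span $\mathrm{span}_{\mathcal K}\,\ud x$ (a Vandermonde/Cauchy-type system in the $1/x_i$), and deeper layers follow by induction, replaying the same computation with the role of the output played by an already-observable layer-$L$ node. To complete your version you would need either to carry out the confluent-Vandermonde induction you outline---tracking the coefficients propagated from deeper branches and certifying that the determinant is a nonzero element of $\mathcal K$---or to switch to a construction, like the paper's product/logarithmic-derivative trick, for which the spanning argument is explicit.
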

\begin{proof} From the structure of the graph we can order its nodes by layers, where nodes with distance $k$ to the output $\sf y$ belong to the $k$-th layer  (Fig. 4a).
We will prove the claim by induction in the number of layers:
\begin{itemize}
\item[(i)] For one layer, denote its nodes by $\{ {\sf x}_{1}, \cdots, {\sf x}_{d_1} \}$ where $d_1$ is the number of nodes.  One particular dynamics admissible for this graph is
\begin{equation}
\label{eq:sys-one-layer}
\dot x_{i} = c_i, \ i =1, \dots, d_1;  \quad y = x_{1} x_{2} \cdots x_{d_1},
\end{equation}
with $c_i$ some non-zero constants.
In the following we show that  Eq. \eqref{eq:sys-one-layer} is  observable by proving that the span of $\ud y$ and its derivatives $\ud y^{(k)}$ equals  $ \span_{\mathcal K} \ud x$.
If $d_1 = 1$ the claim follows directly, because there is only one state variable $x_1$ and $y = x_1$ renders it observable.
Consider now that $d_1 >1$.
From direct calculation we obtain the identity:
$$\dot y = \frac{\ud}{\ud t} \left( \prod_{i=1}^{d_1} x_i\right) =  \left( \prod_{i=1}^{d_1} x_i\right)  \left( \sum_{i=1}^{d_1} \frac{c_i}{x_i}\right)  = ( y)  \left( \sum_{i=1}^{d_1} \frac{c_i}{x_i}\right).$$
The variable $z := \dot y / y$  is  observable from $y$.
Therefore, the system of Eq. \eqref{eq:sys-one-layer} will be  observable if the span of $\ud z$ and its derivatives $\ud z^{(k)}$ equals  $\span_{\mathcal K} \ud x$.
Note that
$$z^{(k)} = \sum_{i=1}^{d_1} (-1)^k k! \ \frac{c_i^{k+1}}{x_i^{k+1}},  $$
so its differential is
$$\ud z^{(k)} = \sum_{i=1}^{d_1}(-1)^{k+1} (k+1)!  \ \frac{ c_i^{k+1}}{x_i^{k+2}} \ud x_i. $$
Taking $k = d_1+1$, the set $\{\ud z, \cdots, \ud z^{(d_1+1)} \}$ will contain the functions $\{1/x_i^2, \cdots, 1/x_i^{d_i + 3} \}$, $i = 1,\cdots, d_1$,  whose span is $\span_{\mathcal K} \ud x$. This proves that the system of Eq. \eqref{eq:sys-one-layer} is  observable, and thus that a graph $\mathcal G$ with one layer is  locally observable.
\item[(ii)] For the induction step, we show that  if a graph $\mathcal G$ with $L$ layers is structurally observable, then a graph with $L+1$ layers is also structurally observable.
By definition,  the nodes in the $(L+1)$-th layer are only connected to nodes in the $L$-th layer. Furthermore, they are connected in the same way as nodes in the first layer are connected to the output node (Fig. 4a).
Therefore, the  argument  in point (i) with $y$ replaced by the corresponding node in the $L$-th layer implies that the nodes in the $(L+1)$-th layer are observable. 
This completes the proof. 
\end{itemize}
\end{proof}

\begin{figure}[t!]
\begin{center}
\includegraphics[width=9cm]{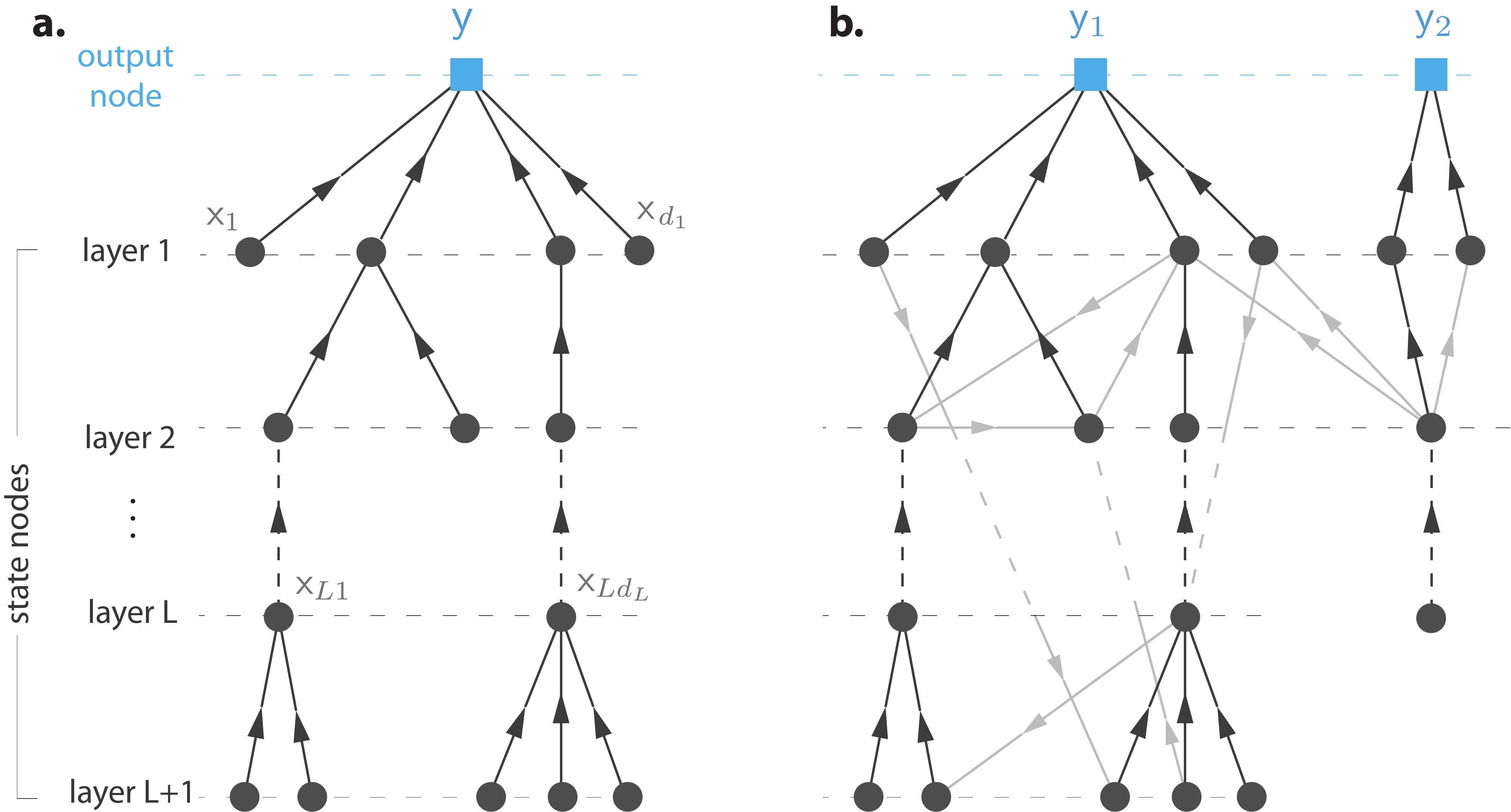}
\caption{
{\bf a.} A tree graph $\mathcal G$ topped at ${\sf y}$ with a single outgoing edge per  state node. 
{\bf b.} From any graph $\mathcal G$ such that each node has a path to  $\sf y$  (dark and light edges),  a subgraph $\tilde {\mathcal G}$ (dark edges) can be obtained such that it is a tree topped at $\sf y$ and each state node has a single outgoing edge. 
}
\label{fig:unstable}
\vspace*{-0.5cm}
\end{center}
\end{figure}

The final result follows by decomposing the graph into disjoint trees topped at the output nodes:


\begin{proposition}
\label{prop:struct-obs}
 $\mathfrak D(\mathcal G)$ is structurally  observable iff each state node  is the start-node of a path that ends in ${\sf Y}$.
\end{proposition}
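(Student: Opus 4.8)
The plan is to mirror the proof of Proposition~\ref{prop:prop-struct-accessibility} by exploiting the input/output duality, splitting the biconditional into its two implications. The two analytic ingredients used on the accessibility side are already available here as Lemma~\ref{lemma:obs-generic}: part~(i) guarantees that observable systems are generic in a structurally observable class, and part~(ii) guarantees monotonicity under passing to spanning subgraphs. Hence the only genuinely new work is a combinatorial decomposition of $\mathcal G$ into trees topped at output nodes, fed into the base case of Lemma~\ref{lemma:observability-tree}.

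For \emph{necessity} I would argue by contraposition: if some state node ${\sf x}_i$ is \emph{not} the start-node of any path ending in ${\sf Y}$, then $\zeta = x_i$ is a hidden element of \emph{every} system in $\mathfrak D(\mathcal G)$. The key point is that the chain rule generates terms only along edges present in $\mathcal G$, so $y^{(k)}$ can depend on $x_i$ only if ${\sf x}_i$ reaches ${\sf Y}$ by a directed path (of length at most $k+1$). The absence of any such path forces $\partial y^{(k)}/\partial x_i \equiv 0$ for all $k \ge 0$ and all $\{f,h\}$ in the class, so no $\ud y^{(k)}$ carries a $\ud x_i$ component and therefore $\ud x_i \notin \span_{\mathcal K}\{\ud y, \dots, \ud y^{(k)}\}$ for every $k$. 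Thus no system in the class is observable, which is the contrapositive of ``structurally observable $\Rightarrow$ each state node reaches ${\sf Y}$''.

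For \emph{sufficiency} I would build a convenient spanning subgraph and invoke Lemma~\ref{lemma:observability-tree}. Assign to each state node ${\sf x}_i$ its shortest-path distance $d({\sf x}_i)$ to ${\sf Y}$, which is finite by hypothesis, and select for each ${\sf x}_i$ a single outgoing edge that decreases $d$ by one. Because $d$ strictly decreases along chosen edges, the selection is acyclic and every maximal chain of state nodes terminates at an output node; together with the leftover input and output nodes kept isolated, this yields a spanning subgraph $\tilde{\mathcal G} \subseteq \mathcal G$ that is a disjoint union of directed trees, each topped at a single output node and with exactly one outgoing edge per state node (see Fig.~4b). Lemma~\ref{lemma:observability-tree} makes each such tree structurally observable; since distinct trees involve disjoint state variables and distinct outputs, the corresponding dynamics decouple and $\mathfrak D(\tilde{\mathcal G})$ is structurally observable. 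Finally, Lemma~\ref{lemma:obs-generic}-(ii) lifts structural observability from $\tilde{\mathcal G}$ to $\mathcal G$, completing the proof.

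The step I expect to be most delicate is the passage from the single-tree base case to the disjoint forest $\tilde{\mathcal G}$: I must verify that the shortest-distance edge selection never creates a cycle (handled by the strict decrease of $d$, so each weakly-connected component is a tree with a unique output sink) and that observability of the whole follows from observability of the decoupled components. Everything else is the exact dual of the structural-accessibility argument.
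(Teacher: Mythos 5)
Your proposal is correct and follows essentially the same route as the paper: necessity via the hidden element $\zeta = x_i$ for a state node with no path to ${\sf Y}$, and sufficiency by extracting a spanning forest of ${\sf Y}$-topped trees (one outgoing edge per state node), then applying Lemma~\ref{lemma:observability-tree} and lifting to $\mathcal G$ via Lemma~\ref{lemma:obs-generic}-(ii). Your only additions—the explicit shortest-distance edge selection proving the forest exists, and the remark that disjoint trees decouple so the forest case reduces to the single-tree lemma—are details the paper leaves implicit, and they strengthen rather than change the argument.
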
 
\begin{proof} \hfill
\begin{itemize}
\item[($\Leftarrow$)] By contradiction. If there is a state node ${\sf x}_i$ that is not the start-node of any ${\sf Y}$-topped path, then $x_i$ itself is a hidden element.
\item[($\Rightarrow$)] Since each state node is the start-node of a $\sf Y$-topped path, note we can always obtain a spanning subgraph $\tilde {\mathcal G}$ of $\mathcal G$ such that: (i) it is a disjoint union of (connected) directed trees topped at $\sf Y$; (ii) each state node has a single outgoing edge (Fig. 4b).
 By Lemma \ref{lemma:observability-tree}, $\mathfrak D( \tilde {\mathcal  G})$ is structurally  observable. Since $\tilde {\mathcal G} \subseteq \mathcal G$ is a spanning subgraph, Lemma \ref{lemma:obs-generic}-(ii) implies that $\mathfrak D (\mathcal G)$ is structurally  observable. 
\end{itemize}
\end{proof}

\begin{remark}
\label{rem:trivial-accessibility}
In analogy to Remark 1, in the trivial cases of an empty graph (i.e., a graph without nodes) or a graph without state nodes (i.e.,  the underlying system has no dynamics), applying Definition \ref{def:struct-acc-obs} yields that both graphs are structurally  observable because the set of hidden elements is empty.
\end{remark}

%

%


\section{Discussion and Concluding Remarks}

The notions of structural accessibility and structural observability that we have introduced and characterized  are nonlinear counterparts of the notions of linear structural controllability and linear structural observability. 

We next discuss some testable predictions offered by our theory. In a recent study of the locomotion of the worm \emph{C. elegans}, the ablation of the neuron PDB was found to generate a dilation in the nervous system connectome that decreased its (output) structural linear controllability \cite{yan2017network}. This loss of linear controllability was suggested to imply that the worm lost some ``directions'' in which it is was able to move, which were experimentally confirmed by a decreased ability to produce some specific motion patterns (quantified by a decrease in certain so-called ``eigenworms''). 
Assuming that the nervous system of the \emph{C. elegans} is an arbitrary nonlinear system instead of a linear system, our theory implies that the dilation caused by ablating PBD cannot decrease the structural accessibility of the \emph{C. elegans} connectome. Namely, the nervous system of an ablated worm can reach the same set of states as those of normal worms using perhaps different ``longer'' trajectories (e.g., by using different paths in the connectome that yield different combinations of ``eigenworms'').  Thus,  our structural accessibility theory predicts that PDB ablated worms can still adopt each body pose that a non-ablated worm can adopt. 
More generally, we predict that the ability of a worm to adopt a body pose is preserved as long as the ablated interneurons do not fully disconnect an input (i.e., a sensory neuron) or an output (i.e., a motor neuron).

We emphasize that more detailed predictions for the impact of the network structure on the controllability or observability properties can be obtained when the class of dynamics that the system can take is better known ---such as  neuronal,  ecological, gene regulatory, or epidemic systems, see e.g., \cite{barzel2013universality}. 
Such an analysis would provide graph conditions for structural accessibility and structural local observability that are ``between'' those of Theorem 1 (i.e., linear systems), and those of Theorem 2 (i.e., arbitrary nonlinear systems).  
Indeed, note that the conditions of Theorem 2 are always necessary, but they may not be sufficient when we restrict the system dynamics to belong to a particular class.
For example, in \cite{angulo2017controlling} and \cite{aparicio2018}, we analyzed the structural accessibility and structural local observability properties for the particular class of nonlinear dynamics found in ecosystems.
In this analysis, we found  that the conditions for structural accessibility and structural local observability for  ecological dynamics are indeed stronger than those of Theorem 2.

Finally, our results provide a broader perspective of what  we  can deduce about the controllability or observability properties of a system from knowing only its interconnection network. We have shown that if the control inputs can reach all state nodes through a path in the network,  then there exists some admissible system dynamics that is accessible. Similarly, if all state nodes can reach an output through a path in the network, then there exists some admissible system dynamics that is locally observable. These two facts suggest that the interconnection network only encodes the essential information of the controllability and observability properties of complex systems.

\vspace*{0.2cm}

{{\it Acknowledgements.}  
M.T.A. gratefully acknowledges the financial support from CONACyT, M\'exico, and LS2N, France.
We also thank Yang-Yu Liu and Yu~Kawano for  insightful comments about preliminary versions of this paper.
}

\ifCLASSOPTIONcaptionsoff
  \newpage
\fi


\bibliographystyle{IEEEtran}
\bibliography{IEEEabrv,ContractionTheory}
%

\end{document}